\theoremstyle{plain}
\theoremstyle{definition}
\newtheorem{ex}[thm]{Example}
\begin{document}
\title[Bisimulations for Control Operators with Dynamic Prompt
Generation]{Environmental Bisimulations for Delimited-Control Operators with
  Dynamic Prompt Generation}

\author[A. Aristiz\'abal]{Andr\'es Aristiz\'abal\rsuper a}
\address{{\lsuper{a}}Universidad Icesi, Cali, Colombia}
\email{aaaristizabal@icesi.edu.co}

\author[D. Biernacki]{Dariusz Biernacki\rsuper b}
\address{{\lsuper{b,d}}University of Wroc\l{}aw,
  Wroc\l{}aw, Poland}
\email{\{dabi,ppolesiuk\}@cs.uni.wroc.pl}
\thanks{This work
    was partially supported by PHC Polonium and by National Science Centre,
    Poland, grant no. 2014/15/B/ST6/00619.}

\author[S. Lenglet]{Sergue\"i Lenglet\rsuper c}
\address{{\lsuper{c}}Universit\'e de Lorraine,
  Nancy, France}
\email{serguei.lenglet@univ-lorraine.fr}

\author[P. Polesiuk]{Piotr Polesiuk\rsuper d}

\keywords{delimited continuation, dynamic prompt generation,
  contextual equivalence, environmental bisimulation, up-to technique}
\subjclass{D.3.3 Language Constructs and Features, F.3.2 Semantics
  of~Programming Languages}

\begin{abstract}
  We present sound and complete environmental bisimilarities for a
  variant of Dybvig~et~al.'s calculus of multi-prompted
  delimited-control operators with dynamic prompt generation.  The
  reasoning principles that we obtain generalize and advance the
  existing techniques for establishing program equivalence in calculi
  with single-prompted delimited control.

  The basic theory that we develop is presented using Madiot~et~al.'s
  framework that allows for smooth integration and composition of
  up-to techniques facilitating bisimulation proofs. We also
  generalize the framework in order to express environmental
  bisimulations that support equivalence proofs of evaluation contexts
  representing continuations. This change leads to a novel and
  powerful up-to technique enhancing bisimulation proofs in the
  presence of control operators.
\end{abstract}

\maketitle

\section{Introduction}

Control operators for delimited continuations, introduced
independently by Felleisen~\cite{Felleisen:POPL88} and by Danvy and
Filinski~\cite{Danvy-Filinski:LFP90}, allow the programmer to delimit
the current context of computation and to abstract such a delimited
context as a first-class value.  It has been shown that all
computational effects are expressible in terms of delimited
continuations~\cite{Filinski:POPL94}, and so there exists a large body
of work devoted to this canonical control structure, including our
work on a theory of program equivalence for the operators \textshift
and
\textreset~\cite{Biernacki-Lenglet:FOSSACS12,Biernacki-Lenglet:FLOPS12,
  Biernacki-Lenglet:APLAS13,Biernacki-al:HAL15}.

In their paper on type-directed partial evaluation for typed
$\lambda$-calculus with sums, Balat et al.~\cite{Balat-al:POPL04} have
demonstrated that Gunter et al.'s delimited-control operators
$\mathsf{set}$ and $\mathsf{cupto}$~\cite{Gunter-al:FPCA95}, that
support multiple prompts along with dynamic prompt generation, can
have a practical advantage over single-prompted operators such as
\textshift and \textreset. Delimited-control operators with
dynamically-generated prompts are now available in several production
programming languages such as OCaml~\cite{Kiselyov:FLOPS10} and
Racket~\cite{Flatt-al:ICFP07}, and they have been given formal
semantic treatment in the literature. In particular, Dybvig et
al.~\cite{Dybvig-al:JFP06} have proposed a calculus that extends the
call-by-value $\lambda$-calculus with several primitives that allow
for: fresh-prompt generation, delimiting computations with a prompt,
abstracting control up to the corresponding prompt, and throwing to
captured continuations. Dybvig et al.'s building blocks were shown to
be able to naturally express most of other existing control operators
and as such they form a general framework for studying delimited
continuations. Reasoning about program equivalence in Dybvig et al.'s
calculus is considerably more challenging than in single-prompted
calculi: one needs to reconcile control effects with the intricacies
introduced by fresh-prompt generation and local visibility of prompts.

In this article we investigate the behavioral theory of a slightly
modified version of Dybvig et al.'s calculus that we call the
$\lambdabla$-calculus. One of the most natural notions of program
equivalence in languages based on the $\lambda$-calculus is
\emph{contextual equivalence}: two terms are contextually equivalent
if we cannot distinguish them when evaluated within any context. The
quantification over contexts makes this relation hard to use in
practice, so it is common to characterize it using simpler relations,
like coinductively defined \emph{bisimilarities}. As pointed out
in~\cite{Koutavas-al:MFPS11}, among the existing notions of
bisimilarities, \emph{environmental
  bisimilarity}~\cite{Sangiorgi-al:TOPLAS11} is the most appropriate
candidate to characterize contextual equivalence in a calculus with
generated resources, such as prompts in~$\lambdabla$. Indeed, this
bisimilarity features an environment which accumulates knowledge about
the terms we compare. This is crucial in our case to remember the
relationships between the prompts generated by the compared
programs. We therefore define environmental bisimilarities for
$\lambdabla$, as well as \emph{up-to techniques}, which simplify the
equivalence proof of two given programs. We do so using the recently
developed framework of Madiot et
al.~\cite{Madiot-al:CONCUR14,Madiot:PhD}, where it is simpler to prove
that a bisimilarity and its up-to techniques are \emph{sound} (i.e.,
imply contextual equivalence).

After presenting the syntax, semantics, and contextual equivalence of
the calculus in Section~\ref{s:calculus}, in Section~\ref{s:standard}
we define a sound and complete environmental bisimilarity and its
corresponding up-to techniques. In particular, we define a
bisimulation \emph{up to context}, which allows to forget about a
common context when comparing two terms in a bisimulation proof. The
bisimilarity we define is useful enough to prove, e.g., the folklore
theorem about delimited control~\cite{Biernacki-Danvy:JFP06}
expressing that the static delimited-control operators \textshift and
\textreset~\cite{Danvy-Filinski:LFP90} can be simulated by the dynamic
control operators $\mathsf{control}$ and
$\mathsf{prompt}$~\cite{Felleisen:POPL88}. The technique, however, in
general requires a cumbersome analysis of terms of the form
$\inctx{E}{e}$, where $E$ is a captured evaluation context and $e$ is
any expression (not necessarily a value). We therefore define in
Section~\ref{s:better} a refined bisimilarity, called
$\star$-bisimilarity, and a more expressive bisimulation up to
context, which allows to factor out a context built with captured
continuations. Proving the soundness of these two relations requires
us to extend Madiot et al.'s framework. We show how these new
techniques can be applied to \textshift and \textreset in
Section~\ref{s:shift-reset}, improving over the existing results for
these
operators~\cite{Biernacki-Lenglet:APLAS13,Biernacki-al:HAL15}. Finally,
we discuss related work and conclude in Section~\ref{s:conclusion}.

\iftoggle{paper}{
This article is an extended version of~\cite{Aristizabal-al:FSCD16}.
Compared to that paper, in Section~\ref{ss:LTS} we discuss in more
detail the intricacies of our treatment of public/private prompts in
the definition of the LTS, whereas Section~\ref{s:shift-reset} is
new. An accompanying research report~\cite{Aristizabal-al:Inria16}
contains the omitted proofs from Sections~\ref{s:standard},
\ref{s:better}, and~\ref{s:shift-reset}.}

\section{The Calculus $\lambdabla$}
\label{s:calculus}

The calculus we consider, called $\lambdabla$, extends the
call-by-value $\lambda$-calculus with four building blocks for
constructing delimited-control operators as first proposed by Dybvig
et al.~\cite{Dybvig-al:JFP06}.~\footnote{Dybvig et al.'s control
  operators slightly differ from their counterparts considered in this
  work, but they can be straightforwardly macro-expressed in the
  $\lambdabla$-calculus.}

\subsubsection*{Syntax.} We assume we have a countably infinite set of term
variables, ranged over by $x$, $y$, $z$, and~$k$, as well as a
countably infinite set of prompts, ranged over by $p$, $q$. Given an
entity denoted by a meta-variable $m$, we write $\seq m$ for a
(possibly empty) sequence of such entities. Expressions ($e$), values
($v$), and evaluation contexts ($E$) are defined as follows:

\vspace{3mm}
\noindent
\begin{tabularx}{\textwidth}{@{\hspace{\parindent}}rlXr@{}}
$e$ & $\bnfdef$ & $v \bnfor \app{e}{e} \bnfor \prFresh{x}{e} \bnfor 
  \prReset{v}{e} \bnfor \prWithSC{v}{x}{e} \bnfor \prPushSC{v}{e}$ 
  & (expressions) \\[1mm]
$v$ & $\bnfdef$ & $\var{x} \bnfor \lam{x}{e} \bnfor \prConst{p} \bnfor 
  \reifCtx{E}$
  & (values) \\[1mm]
$E$ & $\bnfdef$ & $\mtectx \bnfor \argectx{E}{e} \bnfor \valectx{v}{E}
  \bnfor \prRstectx{\prConst{p}}{E}$
  & (evaluation contexts)
\end{tabularx}
\vspace{3mm}

Values include captured evaluation contexts $\reifCtx{E}$,
representing delimited continuations, as well as generated prompts
$\prConst{p}$. Expressions include the four building blocks for
delimited control: $\prFresh{x}{e}$ is a prompt-generating construct,
where $x$ represents a fresh prompt locally visible in $e$,
$\prReset{v}{e}$ is a control delimiter for $e$, $\prWithSC{v}{x}{e}$
is a continuation grabbing or capturing construct, and
$\prPushSC{v}{e}$ is a throw construct.

Evaluation contexts, in addition to the standard call-by-value
contexts, include delimited contexts of the form
$\prRstectx{\prConst{p}}{E}$, and they are interpreted outside-in. We
use the standard notation $\inctx{E}{e}$ ($\inctx{E}{E'}$) for
plugging a context $E$ with an expression $e$ (with a context
$E'$). Evaluation contexts are a special case of (general) contexts,
understood as a term with a hole and ranged over by $C$.

The expressions $\lam{x}{e}$, $\prFresh{x}{e}$, and $\prWithSC{v}{x}{e}$ bind
$x$; we adopt the standard conventions concerning $\alpha$-equivalence. If $x$
does not occur in $e$, we write $\lam{\osef}{e}$, $\prFresh{\osef}{e}$, and
$\prWithSC{v}{\osef}{e}$. The set of free variables of $e$ is written
$\freeVars{e}$; a term $e$ is called closed if $\freeVars{e} = \emptyset$. We
extend these notions to evaluation contexts. A variable is called \emph{fresh}
if it is free for all the entities under consideration. We write $\promptsOf{e}$
(or $\promptsOf{E}$) for the set of all prompts that occur in $e$ (or $E$
respectively).  The set $\surPr{E}$ of surrounding prompts in $E$ is the set of
all prompts guarding the hole in $E$, defined as
$\{ \prConst{p} \,|\, \exists E_1, E_2, E =
\inctx{E_1}{\prRstectx{\prConst{p}}{E_2}} \}$.

\subsubsection*{Reduction semantics.} The reduction semantics of $\lambdabla$ is
given by the following rules:

\begin{minipage}{0.55\textwidth}
\[
\begin{array}{rcll}
\app{(\lam{x}{e})}{v} & \rawred & \subst{e}{x}{v} &\\[1mm]
\prReset{\prConst{p}}{v} & \rawred & v &\\[1mm]
\prReset{\prConst{p}}{\inctx{E}{\prWithSC{\prConst{p}}{x}{e}}} & \rawred &
  \subst{e}{x}{\reifCtx{E}} & \quad \prConst{p}\notin\surPr{E}\\[1mm]
\prPushSC{\reifCtx{E}}{e} & \rawred & \inctx{E}{e} &\\[1mm]
\prFresh{x}{e} & \rawred & \subst{e}{x}{\prConst{p}}
& \quad \prConst{p}\notin\promptsOf{e}
\end{array}
\]
\end{minipage}
\begin{minipage}{0.45\textwidth}

\begin{mathpar}
  \inferrule[Compatibility]{\red{e_1}{e_2} \\ \fresh{e_2}{e_1}{E}}
            {\red{\inctx{E}{e_1}}{\inctx{E}{e_2}}}
\end{mathpar}
\end{minipage}

\vspace{1em} The first rule is the standard $\beta_v$-reduction. The
second rule signals that a computation has been completed for a given
prompt. The third rule abstracts the evaluation context up to the
dynamically nearest control delimiter matching the prompt of the grab
operator. In the fourth rule, an expression is thrown (plugged,
really) to the captured context. Note that, like in Dybvig et al.'s
calculus, the expression $e$ is not evaluated before the throw
operation takes place. In the last rule, a prompt $\prConst{p}$ is
generated under the condition that it is fresh for~$e$.

The compatibility rule needs a side condition, simply because a prompt
that is fresh for~$e$ may not be fresh for a surrounding evaluation
context. Given three entities $m_1$, $m_2$, $m_3$ for
which~$\rawPrompts$ is defined, we write $\fresh{m_1}{m_2}{m_3}$ for
the condition $(\promptsOf{m_1} \setminus
\promptsOf{m_2})\cap\promptsOf{m_3} = \varnothing$, so the side
condition states that $E$ must not mention prompts generated in the
reduction step $\red{e_1}{e_2}$. This approach differs from the
previous work on bisimulations for resource-generating
constructs~\cite{Koutavas-Wand:POPL06,Koutavas-Wand:ESOP06,Sumii:CSL09,
  Sumii-Pierce:TCS07,Sumii-Pierce:JACM07,Benton-Koutavas:MSR2008,
  Pierard-Sumii:LICS12}, where configurations of the operational
semantics contain explicit information about the resources, typically
represented by a set. We find our way of proceeding less invasive to
the semantics of the calculus.

When reasoning about reductions in the $\lambdabla$-calculus, we rely
on the notion of {\em permutation} (a bijection on prompts), ranged
over by $\sigma$, which allows to reshuffle the prompts of an
expression to avoid potential collisions: $e$ with prompts permuted by
$\sigma$ is written $\applyPrPerm{\sigma}{e}$. E.g., we can use the
first item of the following lemma before applying the compatibility
rule, to be sure that any prompt generated by $e_1 \rawred e_2$ is not
in~$\promptsOf E$.
\begin{lem}
\label{l:red-perm}\label{l:red-perm-bis}  \label{l:fresh-perm}
Let $\sigma$ be a permutation. 
\begin{itemize}
\item If $\red{e_1}{e_2}$ then
  $\red{\applyPrPerm{\sigma}{e_1}}{\applyPrPerm{\sigma}{e_2}}$.
\item For any entities $m_1$, $m_2$, $m_3$, we have
  $\fresh{m_1}{m_2}{m_3}$ iff
  $\fresh{m_1\sigma}{m_2\sigma}{m_3\sigma}$.
\end{itemize}
\end{lem}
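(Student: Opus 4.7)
The plan is to treat the two items separately and in the opposite order: item~2 is purely set-theoretic and is needed as a lemma for the compatibility case of item~1.

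For the second item, I would first observe the basic compatibility of prompt-permutation with the prompt-extraction function, namely $\promptsOf{\applyPrPerm{\sigma}{m}} = \sigma(\promptsOf{m})$ for every entity $m$ on which $\rawPrompts$ is defined. This is an easy structural induction on $m$ (expression, value, or context), using that $\sigma$ only renames prompts and leaves the rest of the syntactic structure alone. Since $\sigma$ is a bijection on prompts, it commutes with the set operations $\cap$ and $\setminus$ and preserves emptiness, so
\[
(\promptsOf{m_1}\setminus\promptsOf{m_2})\cap\promptsOf{m_3} = \varnothing
\]
holds iff the same identity holds for the $\sigma$-images, which is exactly $\fresh{m_1\sigma}{m_2\sigma}{m_3\sigma}$.

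For the first item, I would proceed by induction on the derivation of $\red{e_1}{e_2}$, one case per reduction rule. The underlying fact used in every base case is that prompt-permutation commutes with capture-avoiding substitution of values for term variables, i.e.\ $\applyPrPerm{\sigma}{\subst{e}{x}{v}} = \subst{\applyPrPerm{\sigma}{e}}{x}{\applyPrPerm{\sigma}{v}}$, which holds because $\sigma$ acts only on prompts. The $\beta_v$, reset-on-value, and throw rules then go through immediately. For the grab rule $\prReset{\prConst{p}}{\inctx{E}{\prWithSC{\prConst{p}}{x}{e}}} \rawred \subst{e}{x}{\reifCtx{E}}$, applying $\sigma$ yields the analogous redex with $\prConst{\sigma(p)}$, $\applyPrPerm{\sigma}{E}$, and $\applyPrPerm{\sigma}{e}$; the side condition $\sigma(p)\notin\surPr{\applyPrPerm{\sigma}{E}}$ follows from $\prConst{p}\notin\surPr{E}$ together with $\surPr{\applyPrPerm{\sigma}{E}} = \sigma(\surPr{E})$ and the bijectivity of $\sigma$. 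For the fresh-prompt rule $\prFresh{x}{e}\rawred\subst{e}{x}{\prConst{p}}$ with $\prConst{p}\notin\promptsOf{e}$, I would instantiate the target reduction with the prompt $\sigma(p)$; freshness for $\applyPrPerm{\sigma}{e}$ follows from $\promptsOf{\applyPrPerm{\sigma}{e}} = \sigma(\promptsOf{e})$, and the resulting term agrees with $\applyPrPerm{\sigma}{\subst{e}{x}{\prConst{p}}}$ by the substitution lemma.

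For the compatibility rule, the induction hypothesis gives $\red{\applyPrPerm{\sigma}{e_1}}{\applyPrPerm{\sigma}{e_2}}$, and plugging through $\applyPrPerm{\sigma}{E}$ requires the side condition $\fresh{\applyPrPerm{\sigma}{e_2}}{\applyPrPerm{\sigma}{e_1}}{\applyPrPerm{\sigma}{E}}$; this is precisely item~2 applied to the original side condition $\fresh{e_2}{e_1}{E}$. The only mildly delicate point, and the one most likely to trip up a hasty proof, is keeping the freshness bookkeeping honest across the grab and fresh-prompt cases; everything else is an exercise in pushing $\sigma$ through the syntax.
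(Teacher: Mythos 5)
Your proof is correct and follows the standard route: the paper itself omits the proof of this lemma as routine, and the argument it implicitly relies on is exactly yours --- structural induction to show $\promptsOf{\applyPrPerm{\sigma}{m}} = \sigma(\promptsOf{m})$, bijectivity of $\sigma$ to transfer the freshness side conditions, commutation of $\sigma$ with substitution for the base reduction rules, and item~2 to discharge the side condition of the compatibility rule. The one point you rightly flag as delicate --- re-instantiating the nondeterministically chosen fresh prompt as $\sigma(p)$ in the $\prFresh{x}{e}$ case --- is handled correctly.
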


%\noindent
A closed term $e$ either uniquely, up to permutation of prompts,
reduces to a term~$e'$, or it is a normal form (i.e., there is no
$e''$ such that $\red{e}{e''}$). In the latter case, we distinguish
values, control-stuck terms $\inctx{E}{\prWithSC{p}{k}{e}}$ where
$p\not\in\surPr{E}$, and the remaining expressions that we call errors
(e.g., $\inctx{E}{\app{p}{v}}$ or
$\inctx{E}{\prWithSC{\lam{x}{e}}{k}{e'}}$). We write
$\redrtc{e_1}{e_2}$ if $e_1$ reduces to $e_2$ in many (possibly 0)
steps, 
and we write $\divOrErr{e}$ when a term $e$ diverges (i.e., there exists an
infinite sequence of reductions starting with $e$) or when it reduces (in many
steps) to an error.

\begin{ex}
  Let us assume that $u$, $v$, and $w$ are values, $e$ is an
  expression, $\prConst{p}$ and $\prConst{q}$ are two different
  prompts with $\prConst{q}$ not occurring in $u$, $v$, $w$ and
  $e$. Then the following reduction sequence illustrates how fresh
  prompts are generated (1), how delimited continuations are captured
  (2 and 4), and how expressions are thrown to captured continuations
  (3):

  \[
  \begin{array}{rcl}
  \prFresh{x}
          {\prReset
            {x}
            {\app{u}
              {(\prReset{p}{\app{v}
                  {(\prWithSC{x}{k}{\app{w}
                      {(\prPushSC{k}{(\prWithSC{\prConst{p}}
                          {\_}{e})})}})}})}}}
          & \rawred & \quad\quad\quad (1)
          \\[1mm]
          \prReset
            {\prConst{q}}
            {\app{u}
              {(\prReset{p}{\app{v}
                  {(\prWithSC{q}{k}{\app{w}
                      {(\prPushSC{k}{(\prWithSC{p}{\_}{e})})}})}})}}
          & \rawred & \quad\quad\quad (2)
          \\[1mm]
          \app{w}
              {(\prPushSC
                {\reifCtx{\app{u}{(\prReset{p}{\app{v}{\hole}})}}}
                {(\prWithSC{p}{\_}{e})})}
          & \rawred & \quad\quad\quad (3)
          \\[1mm]
          \app{w}
              {(\app{u}{(\prReset{p}{\app{v}{(\prWithSC{p}{\_}{e})}})})}
          & \rawred & \quad\quad\quad (4)
          \\[1mm]
          \app{w}
              {(\app{u}{e})}
          & &
  \end{array}
  \]

\noindent
If we throw $\prWithSC{x}{\_}{e}$ to $k$ in the initial term instead of
$\prWithSC{p}{\_}{e}$, then the reduction sequence would terminate with a
control-stuck term
$\app{w}{(\app{u}{(\prReset{p}{\app{v}{(\prWithSC{q}{\_}{e})}})})}$ that could
not be unstuck by any evaluation context. Indeed, if we plug the initial
expression modified as suggested above, e.g., in the context
$\prReset{q}{\hole}$, the compatibility rule requires that in step (1) the
generated prompt $q$ be renamed into some other prompt $r$ that does not occur
in the terms under consideration, and the corresponding reduction sequence
terminates with a control-stuck term
$\prReset{q}{\app{w}{(\app{u}{(\prReset{p}{\app{v}{(\prWithSC{r}{\_}{e})}})})}}$.
\end{ex}

When presenting more complex examples, we use the fixed-point operator
$\mathsf{fix}$, $\mathsf{let}$-construct, conditional $\mathsf{if}$
along with boolean values $\constTrue$ and $\constFalse$, and
sequencing $"\mathsf{;}"$, all defined as in the call-by-value
$\lambda$-calculus. We also use the diverging term $\Omega \defeq
\app{(\lam x {\app x x})}{(\lam x {\app x x})}$, and we define an
operator $\prEqual{}{}$ to test the equality between prompts, as
follows:
\vspace{-1mm}
\[
\begin{array}{rcl}
  \prEqual{e_1}{e_2} &
  \defeq &
  \expLet{x}{e_1}
         {\expLet{y}{e_2}
           {\prReset{x}
             {(\expSeq{(\prReset{y}{\prWithSC{x}{\osef}{\constFalse}})}{\constTrue})}}}
\end{array}  
\] 
If $e_1$ and $e_2$ evaluate to different prompts, then the grab
operator captures the context up to the outermost prompt to throw it
away, and $\constFalse$ is returned; otherwise, $\constTrue$ is
returned.

\subsubsection*{Contextual equivalence.} We now define formally what it takes for
two terms to be considered equivalent in the $\lambdabla$-calculus.
First, we characterize when two closed expressions have equivalent
observable actions in the calculus, by defining the following relation
$\eqObs$.

\begin{defi}
\label{def:obs}
We say $e_1$ and $e_2$ have equivalent observable actions, noted
$\related{\eqObs}{e_1}{e_2}$, if
\begin{enumerate}
\item $\redrtc{e_1}{v_1}$ iff $\redrtc{e_2}{v_2}$,
\item $\redrtc{e_1}{\inctx{E_1}{\prWithSC{\prConst{p_1}}{x}{e'_1}}}$
  iff $\redrtc{e_2}{\inctx{E_2}{\prWithSC{\prConst{p_2}}{x}{e'_2}}}$,
  where $p_1 \not\in \surPr{E_1}$ and  $p_2 \not\in \surPr{E_2}$,
\item $\divOrErr{e_1}$ iff $\divOrErr{e_2}$.
\end{enumerate}
\end{defi}

\noindent
We can see that errors and divergence are treated as equivalent, which
is standard.

Based on $\eqObs$, we define \emph{contextual equivalence} as follows.
\begin{defi}[Contextual equivalence]
  Two closed expressions $e_1$ and $e_2$ are contextually equivalent,
  written, $e_1 \ectxEq e_2$, if for all $E$ such that $\promptsOf{E}
  = \varnothing$, we have $\inctx{E}{e_1} \eqObs \inctx{E}{e_2}$.
\end{defi}

\noindent Contextual equivalence can be extended to open terms in a standard
way: if $\freeVars{e_1}\cup\freeVars{e_2}\subseteq\seq x$, then
$\related{\ectxEq}{e_1}{e_2}$ if
$\related{\ectxEq}{\lam{\seq x}{e_1}}{\lam{\seq x}{e_2}}$. We test terms using
only promptless contexts, because the testing context should not use prompts
that are private for the tested expressions. For example, the expressions
$\lam{f}{\app{\app{\var{f}}{\prConst{p}}}{\prConst{q}}}$ and
$\lam{f}{\app{\app{\var{f}}{\prConst{q}}}{\prConst{p}}}$ should be considered
equivalent if nothing is known from the outside about $\prConst{p}$ and
$\prConst{q}$. Prompts occur in expressions because we use reduction semantics,
also known as syntactic theory. But prompts, just as closures or continuations,
are runtime entities, and in other semantics formats such as, e.g., abstract
machines~\cite{Dybvig-al:JFP06}, they would not be a part of the syntax. 

As common in calculi with resource
generation~\cite{Sumii-Pierce:TCS07,Sumii:CSL09,Sangiorgi-al:TOPLAS11}, testing
with evaluation contexts (as in $\ectxEq$) is not the same as testing with all
contexts: we have $\prFresh{x}{\var{x}} \ectxEq \prConst{p}$, but these terms
can be distinguished by \vspace{-1mm}
\[
\expLet{f}{\lam{x}{\mtectx}}{\expIf{\prEqual{
      \app{\var{f}}{\constUnit}}{\app{\var{f}}{\constUnit}}}}{
  \expOmega}{\constUnit}\] In the rest of the article, we show how to
characterize $\ectxEq$ with environmental bisimilarities.\footnote{If
  $\ctxEq$ is the contextual equivalence testing with all contexts,
  then we can prove that $e_1 \ctxEq e_2$ iff $\lam{x}{e_1} \ectxEq
  \lam{x}{e_2}$, where $x$ is any variable. We therefore obtain a
  proof method for $\ctxEq$ as well.}

\begin{rem}
  \label{r:stuck-error}
  Definition~\ref{def:obs} distinguishes control-stuck terms from
  errors, as
  making the distinction allows comparisons with the previous work on
  \textshift and \textreset~\cite{Biernacki-al:HAL15}, where a similar
  choice is made. However, unlike in~\cite{Biernacki-al:HAL15}, the
  contextual equivalence of the present article cannot ``unstuck'' a
  control-stuck term in $\lambdabla$, as we consider promptless
  contexts, so it can be natural to treat stuck terms as errors. We
  explain how making this latter choice impacts the definitions of our
  bisimilarities in Remark~\ref{r:stuck-error-standard} and
  Remark~\ref{r:stuck-error-better}.
\end{rem}

\section{Environmental Bisimilarity}
\label{s:standard}

In this section, we propose a first characterization of $\ectxEq$
using an environmental bisimilarity.  We express the bisimilarity in
the style of~\cite{Madiot-al:CONCUR14}, using a so called first-order
labeled transition system (LTS), to factorize the soundness proofs of
the bisimilarity and its up-to techniques. We start by defining the
LTS and its corresponding bisimilarity.

\subsection{Labeled Transition System and Bisimilarity}
\label{ss:LTS}

In the original formulation of environmental
bisimulation~\cite{Sangiorgi-al:TOPLAS11}, two expressions $e_1$ and
$e_2$ are compared under some environment $\mathcal E$, which
represents the knowledge of an external observer about $e_1$ and
$e_2$. The definition of the bisimulation enforces some conditions
on~$e_1$ and $e_2$ as well as on $\mathcal E$. In Madiot et al.'s
framework~\cite{Madiot-al:CONCUR14,Madiot:PhD}, the conditions on
$e_1$, $e_2$, and $\mathcal E$ are expressed using a LTS between
\emph{states} of the form $(\env, e_1)$ (and $(\envd, e_2)$), where
$\env$ (and $\envd$) is a finite sequence of values corresponding to
the first (and second) projection of the environment $\mathcal E$.
Note that in $(\env, e_1)$, $e_1$ may be a value, and therefore a
state can be simply of the form $\env$. Transitions from states of the
form $(\env, e_1)$ (where $e_1$ is not a value) express conditions on
$e_1$, while transitions from states of the form $\env$ explain how we
compare environments.  In the rest of the paper we use $\env$, $\envd$
to range over finite sequences of values, and we write $\env_i$,
$\envd_i$ for the~$i^{\text{ th}}$ element of the sequence.  We use
$\sts$, $\stt$ to range over states.

Figure~\ref{fig:lts} presents the LTS $\lts\act$, where $\act$ ranges
over all the labels. We define $\promptsOf \env$ as $\bigcup_i
\promptsOf{\env_i}$. The transition $\ltsstuck \emhc$ uses a relation
$e \redbis e'$, defined as follows: if $\red{e}{e'}$, then $e \redbis
e'$, and if $e$ is a normal form, then $e \redbis e$.\footnote{The
  relation $\redbis$ is not exactly the reflexive closure of
  $\rawred$, since an expression which is not a normal form \emph{has}
  to reduce.} To build expressions out of sequences of values, we use
different kinds of \emph{multi-hole contexts} defined as follows.

\vspace{2mm}
\noindent\begin{tabularx}{\linewidth}{@{\hspace{\parindent}}rlXr@{}}
$\mhc$ & $\bnfdef$ & $\cval \mhc \bnfor \app \mhc \mhc \bnfor \prFresh{x}{\mhc} \bnfor 
  \prReset{\cval\mhc} \mhc \bnfor \prWithSC{\cval \mhc}{x}{\mhc} \bnfor
  \prPushSC{\cval \mhc}{\mhc}$ 
  & (contexts) \\[1mm]
$\cval \mhc$ & $\bnfdef$ & $\var{x} \bnfor \lam{x}{\mhc} \bnfor 
  \reifCtx{\emhc} \bnfor \hole_i$
  & (value contexts) \\[1mm]
$\emhc$ & $\bnfdef$ & $\mtectx \bnfor \argectx{\emhc}{\mhc} \bnfor
\valectx{\cval \mhc}{\emhc} \bnfor \prRstectx{\hole_i}{\emhc}$
  & (evaluation contexts)
\end{tabularx}
\vspace{2mm}

\noindent The holes of a multi-hole context are indexed, except for the special
hole $\hole$ of an evaluation context $\emhc$, which is in evaluation position
(that is, filling the other holes of $\emhc$ with values gives a regular
evaluation context $E$).  We write $\inctx \mhc \env$ (respectively
$\inctx {\cval \mhc} \env$ and $\inctx \emhc \env$) for the application of a
context $\mhc$ (respectively $\cval \mhc$ and $\emhc$) to a sequence $\env$ of
values, which consists in replacing $\hole_i$ with $\env_i$; we assume that this
application produces an expression (or an evaluation context in the case of
$\emhc$), i.e., each hole index in the context is smaller or equal than the size
of $\env$, and for each $\prRstectx{\hole_i}{\emhc}$ construct, $\env_i$ is a
prompt.  We write $\inctx \emhc {e, \env}$ as a shorthand for $\inctx E e$ where
$E = \inctx \emhc \env$, meaning that $e$ is put in the non-indexed hole
of~$\emhc$ (note that~$e$ may also be a value). Notice that prompts are not part
of the syntax of~$\cval \mhc$, therefore a multi-hole context does not contain
any prompt: if $\inctx \mhc \env$, $\inctx {\cval \mhc} \env$, or
$\inctx \emhc {e, \env}$ contains a prompt, then it comes from $\env$ or
$e$. Our multi-hole contexts are
promptless because~$\ectxEq$ also tests with promptless contexts.\\

\begin{figure}
\begin{mathpar}
  \inferrule{\red{e_1}{e_2} \\ \fresh{e_2}{e_1} \env}
  {(\env, e_1) \lts\tau (\env, e_2)}
  \and
  \inferrule{\env_i = \lam x e}
  {\env \ltslam i {\cval\mhc} (\env, \subst e x {\inctxe{\cval\mhc} \env})}
  \and
  \inferrule{ }
  {\env \ltsval \env}
  \and
  \inferrule{\env_i = \reifCtx E}
  {\env \ltsctx i {\mhc} (\env, \inctx E {\inctxe {\mhc} \env})}
  \and
  \inferrule{\env_i = \prConst p \\ \env_j = \prConst p}
  {\env \ltsprpt i j \env}
  \and
  \inferrule{\prConst p \notin \promptsOf \env}
  {\env \ltsnuprpt (\env, \prConst p)}
  \and
  \inferrule{\prConst p \notin \surPr {E} \\
    \inctx{\emhc}{\inctx E {\prWithSC{\prConst p} x
        e}, \env} \redbis e'}
  {(\env, \inctx E {\prWithSC{\prConst p} x
        e}) \ltsstuck{\emhc} (\env, e')}
\end{mathpar}
\caption{Labeled Transition System for $\lambdabla$}
\label{fig:lts}
\end{figure}

We now detail the rules of Figure \ref{fig:lts}, starting with the transitions
that one can find in any call-by-value
$\lambda$-calculus~\cite{Madiot-al:CONCUR14}. An internal action $(\env, e_1)
\lts\tau \sts$ corresponds to a reduction step, except we ensure that any
generated prompt is fresh w.r.t.\ $\env$. The transition $\env \ltslam i
{\cval\mhc} \sts$ signals that $\env_i$ is a $\lambda$-abstraction, which can be
tested by passing it an argument built from~$\env$ with the context
$\cval\mhc$. The transition $\ltsctx i \mhc$ for testing continuations is built
the same way, except we use a context $\mhc$, because any expression can be
thrown to a captured context. Finally, the transition $\env \ltsval \env$ means
that the state $\env$ is composed only of values; it does not test anything on
$\env$, but this transition is useful for the soundness proofs of
Section~\ref{ss:up-to}. When we have $\env \rel (\envd, e)$ (where $\rel$ is,
e.g., a bisimulation), then $(\envd, e)$ has to match with $(\envd, e)
\rtc{\lts\tau} \ltsval (\envd, v)$ so that $(\envd, v)$ is related to $\env$. We
can then continue the proofs with two related sequences of values. Such a
transition has been suggested in~\cite[Remark 5.3.6]{Madiot:PhD} to simplify the
proofs for a non-deterministic language, like $\lambdabla$.

We now explain the rules involving prompts.
When comparing two terms generating prompts, one can produce $p$ and the other a
different $q$, so we remember in $\env$, $\envd$ that $p$ corresponds to
$q$. But an observer can compare prompts using $\rawprEqual$, so $p$ has to be
related \emph{only} to~$q$.  We check it with $\ltsprpt i j$: if $\env
\ltsprpt i j \env$, then $\envd$ has to match, meaning
that $\envd_i=\envd_j$, and doing so for all $j$ such that $\env_i = \env_j$
ensures that all copies of $\env_i$ are related only to $\envd_i$.
The transition $\ltsprpt i i$ also signals that $\env_i$ is a prompt and should
be related to a prompt.  The other transition involving prompts is $\env
\ltsnuprpt (\env, p)$, which encodes the possibility for an observer to
generate fresh prompts to compare terms.  If $\env$ is related to $\envd$,
then~$\envd$ has to match by generating a prompt $q$, and we remember that $p$
is related to~$q$. For this rule to be automatically verified, we define the
\emph{prompt checking} rule for a relation $\rel$ as follows:
\vspace{1mm}
\begin{mathpar}
  \inferrule{\env \rel \envd \\ p \notin \promptsOf \env \\
    q \notin \promptsOf \envd} {(\env, p) \rel (\envd, q) }
    \;(\promptCheckRule)
\end{mathpar}
\vspace{1mm}

\noindent
Henceforth, when we construct a bisimulation $\rel$ by giving a set of
rules, we always include the $(\promptCheckRule)$ rule so that the
$\ltsnuprpt$ transition is always verified.

Finally, the transition $\ltsstuck {\emhc}$ deals with stuck terms. An
expression $\inctx E {\prWithSC p x e}$ is able to reduce if the surrounding
context is able to provide a delimiter $\rawprReset p$. However, it is possible
only if $p$ is available for the outside, and therefore is in $\env$. If
$p \notin \surPr{\inctx \emhc \env}$, then
$\inctx{\emhc}{\inctx E {\prWithSC{\prConst p} x e}, \env}$ remains stuck, and
we have
$\inctx{\emhc}{\inctx E {\prWithSC{\prConst p} x e}, \env} \redbis
\inctx{\emhc}{\inctx E {\prWithSC{\prConst p} x e}, \env}$.
Otherwise, it can reduce and we have
$\inctx{\emhc}{\inctx E {\prWithSC{\prConst p} x e}, \env} \redbis e'$, where
$e'$ is the result after the capture. The rule for $\ltsstuck \emhc$ may seem
demanding, as it tests stuck terms with all contexts $\emhc$, but up-to
techniques will alleviate this issue (see
Example~\ref{ex:stuck-utctx}). Besides, we believe testing all contexts is
necessary to be sound and complete w.r.t. contextual equivalence. Inspired by
the previous work on \textshift and
\textreset~\cite{Biernacki-Lenglet:APLAS13,Biernacki-al:HAL15}, one could
propose the following rule
\vspace{1mm}
\begin{mathpar}
  \inferrule{\prConst p \notin \surPr {E} \\ \env_i = p \\
    \red{\prReset p {\inctx{\emhc}{\inctx E {\prWithSC{\prConst p} x
        e}, \env}}}{e'}}
  {(\env, \inctx E {\prWithSC{\prConst p} x
        e}) \ltsstuck{\prReset{\hole_i}{\emhc}} (\env, e')}~(*)
\end{mathpar}
\vspace{1mm}

\noindent which tests stuck terms with context of the form $\prReset p
\emhc$, and only if $p$ is in $\env$. This rule alone is not sound, as
it would relate $(\emptyset, \Omega)$ and $(\emptyset, \inctx E
{\prWithSC{\prConst p} x e})$, because $\prConst p$ does not occur in
the environment. We could retrieve soundness by simply adding a rule
which tests if an expression is control-stuck, to deal with this kind
of situation. However, the rule $(*)$ is also too discriminating and
would break completeness, as we can see with the next two examples.

\begin{ex}
  Stuck terms may be equivalent, even though the prompts they use are not
  related in $\env$, $\envd$. For example, consider $(p_1, \expFixtwo x
  {\prWithSC{p_1} y x})$ and $(p_2, \prWithSC q y e)$, where $p_2 \neq q$ and
  $e$ is any expression. Because we can use $p_1$ to build testing contexts, we
  can trigger the capture for the first term. By doing so, we make it reduce to
  itself, while the second term remains stuck in any context. We can prove them
  bisimilar with the rules of Figure~\ref{fig:lts}. In contrast, $(p_2,
  \prWithSC q y e)$ cannot make a transition with rule $(*)$ (because $q \neq
  p_2$) while $(p_1, \expFixtwo x {\prWithSC{p_1} y x})$ can, so rule $(*)$
  would wrongfully distinguish these two expressions.
\end{ex}

\begin{ex}
  Assuming $p \neq q$, the expression
  $e_1 \defeq \prWithSC q \osef {\prWithSC p \osef v}$ aborts the current
  continuation up to the first enclosing delimiter $\rawprReset p$ which is
  behind a delimiter $\rawprReset q$, and then returns~$v$. The term
  $e_2 \defeq \expFixtwo x {\prWithSC p k {\expIf {\expInSp q k} v x}}$ has the
  same behavior: it decomposes the continuation piece by piece, repeatedly
  capturing $k$ up to $\rawprReset p$, until it finds $\rawprReset q$ in
  $k$. Testing if $\expInSp q k$ can be implemented in a similar way as testing
  prompt equality:
  $\expInSp q k \defeq \prFresh x {\prReset x {\prReset q {(\expSeq {(\prReset x
          {(\prPushSC k {\prWithSC q \osef {\prWithSC x \osef \constFalse}})})}
        \constTrue)}}} $.
  Again, the rule $(*)$ wrongfully distinguishes $(p, q, e_1)$ and $(p, q, e_2)$,
  because $e_1$ captures on $q$ first while $e_2$ captures on $p$.
\end{ex}

For weak transitions, we define $\ltswktau$ as $\rtc{\lts\tau}$, $\ltswk \act$
as $\ltswktau$ if $\act=\tau$ and as $\ltswktau \lts\act \ltswktau$
otherwise. We define bisimulation and bisimilarity using a more general notion
of \emph{progress}. Henceforth, we
let $\rel$, $\rels$ range over relations on states.

\begin{defi}
  \label{d:progress}
  A relation $\rel$ progresses to $\rels$, written $\rel \progress \rels$, if
  $\rel \mathop\subseteq \rels$ and $\sts \rel \stt$ implies
  \begin{itemize}
  \item if $\sts \lts\act \sts'$, then there exists $\stt'$ such that $\stt
    \ltswk\act \stt'$ and $\sts' \rels \stt'$;
  \item the converse of the above condition on $\stt$.
  \end{itemize}
  A \emph{bisimulation} is a relation $\rel$ such that $\rel \progress \rel$, and
  \emph{bisimilarity} $\bisim$ is the union of all bisimulations. 
\end{defi}

\subsection{Up-to Techniques, Soundness, and Completeness}
\label{ss:up-to}

Before defining the up-to techniques for $\lambdabla$, we briefly recall the
main definitions and results we use
from~\cite{Sangiorgi-Pous:11,Madiot-al:CONCUR14,Madiot:PhD}; see these works for
more details. We use $f$, $g$ to range over functions on relations on states. An
\emph{up-to technique} is a function $f$ such that $\rel \progress f(\rel)$
implies $\rel \mathop\subseteq \bisim$. However, this definition is difficult to
use to prove that a given $f$ is an up-to technique, so we rely on
\emph{compatibility} instead, which gives sufficient conditions for $f$ to be an
up-to technique.

We first define some auxiliary notions and notations. We write $f
\subseteq g$ if $f(\rel) \subseteq g(\rel)$ for all $\rel$. We define
$f \cup g$ argument-wise, i.e., $(f \cup g)(\rel)=f(\rel) \cup
g(\rel)$, and given a set $\setF$ of functions, we also write~$\setF$
for the function defined as $\bigcup_{f \in \setF} f$. We define
$f^\omega$ as $\bigcup_{n \in \mathbb N} f^n$. We write $\rawid$ for
the identity function on relations, and $\fid f$ for $f \mathop\cup
\rawid$. A function~$f$ is monotone if $\rel \mathop\subseteq \rels$
implies $f(\rel) \mathop\subseteq f(\rels)$.  We write $\finpower\rel$
for the set of finite subsets of $\rel$, and we say $f$ is continuous
if it can be defined by its image on these finite subsets, i.e., if
$f(\rel) \mathop\subseteq \bigcup_{\rels \in
  \finpower\rel}f(\rels)$. The up-to techniques of the present paper
are defined by inference rules with a finite number of premises, so
they are trivially continuous. Continuous functions are interesting
because of their properties:\footnote{Unlike in~\cite{Madiot:PhD}, we
  use $\fid f$ instead of $f$ in the last property of
  Lemma~\ref{l:continuity} (expressing idempotence of ${\fid
    f}^\omega$), as $\rawid$ has to be factored in somehow for the
  property to hold.}

\begin{lem}
  \label{l:continuity} 
  If $f$ and $g$ are continuous, then $f \comp g$ and $f \cup g$ are continuous.
  
  If $f$ is continuous, then $f$ is monotone, and
  $f \comp {\fid f}^\omega \subseteq {\fid f}^\omega$.
\end{lem}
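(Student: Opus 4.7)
} The plan is to prove the four assertions in the natural order, exploiting the finite-witness characterisation of continuity directly. For the closure of continuity under union, given $x \in (f \cup g)(\rel) = f(\rel) \cup g(\rel)$, I would dispatch on which disjunct $x$ belongs to and invoke continuity of the corresponding function to extract a finite $\rels \in \finpower\rel$ with $x \in f(\rels)$ (or $g(\rels)$), hence $x \in (f \cup g)(\rels)$. This part is routine.

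For the closure under composition, I would argue in two layers. Let $x \in (f \comp g)(\rel) = f(g(\rel))$. Continuity of $f$ furnishes a finite $\rels' \subseteq g(\rel)$ with $x \in f(\rels')$; then for each $y \in \rels'$, continuity of $g$ furnishes a finite $\rels_y \subseteq \rel$ with $y \in g(\rels_y)$. Taking $\rels = \bigcup_{y \in \rels'} \rels_y$, which is finite because $\rels'$ is, monotonicity of $g$ gives $\rels' \subseteq g(\rels)$, and monotonicity of $f$ then gives $x \in f(\rels') \subseteq f(g(\rels)) = (f \comp g)(\rels)$. So the main obligation before this step is monotonicity of continuous functions, which I would establish first: if $\rel \subseteq \rels$ and $x \in f(\rel)$, continuity supplies a finite $\rels' \subseteq \rel \subseteq \rels$ with $x \in f(\rels')$, and since $\rels'$ is itself one of the finite subsets appearing in the characterisation of $f(\rels)$ (via the standard reading of continuity as monotonicity together with preservation of directed unions of finite approximants), we conclude $x \in f(\rels)$.

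For the final inclusion $f \comp {\fid f}^\omega \subseteq {\fid f}^\omega$, I would exploit that $(\fid f)^n(\rel)$ forms an increasing chain in $n$: indeed $\rel \subseteq \fid f(\rel)$ trivially, and monotonicity of $\fid f$ (inherited from monotonicity of $f$) propagates the inclusion up the chain, so ${\fid f}^\omega(\rel) = \bigcup_n (\fid f)^n(\rel)$ is a directed union. Now pick any $x \in f({\fid f}^\omega(\rel))$. Continuity of $f$ yields a finite $\rels \subseteq {\fid f}^\omega(\rel)$ with $x \in f(\rels)$; since $\rels$ is finite and the chain is increasing, there is some $N$ with $\rels \subseteq (\fid f)^N(\rel)$. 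By monotonicity of $f$, $x \in f((\fid f)^N(\rel)) \subseteq \fid f((\fid f)^N(\rel)) = (\fid f)^{N+1}(\rel) \subseteq {\fid f}^\omega(\rel)$, which is what we want.

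I expect the main obstacle to be bookkeeping the two layers of continuity-plus-monotonicity in the composition case and making the directedness argument for ${\fid f}^\omega$ crisp: every use of continuity produces a finite witness that then has to be aligned with a monotonicity step before it can be fed back into another function. The arguments are elementary, but each step must be careful to distinguish continuity (the existence of a finite witness) from monotonicity (the ability to enlarge that witness), since the lemma uses exactly the fact that continuity supplies both.
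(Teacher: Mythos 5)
Your overall route---monotonicity first, then a two-layer finite-witness extraction for $f \comp g$, then an increasing-chain argument for ${\fid f}^\omega$---is the standard one for Madiot-style continuity, and the paper itself omits the proof of this lemma (it is deferred to the accompanying research report), so there is no in-text argument to diverge from. The union case, the composition case, and the final inclusion are all correct as you set them up; in the last step you could even note that the chain $({\fid f})^n(\rel)$ is increasing for free, since $\fid f(\rels) = f(\rels) \cup \rels \supseteq \rels$, with no appeal to monotonicity needed there.

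The one step that needs repair is your derivation of monotonicity, which as phrased is circular against the paper's literal definition. The paper only requires the inclusion $f(\rel) \subseteq \bigcup_{\rels \in \finpower\rel} f(\rels)$, and from $x \in f(\rels')$ with $\rels'$ a finite subset of $\relt$ you cannot conclude $x \in f(\relt)$ without already knowing $f(\rels') \subseteq f(\relt)$---an instance of the monotonicity you are proving. Indeed, the one-directional inclusion alone does not imply monotonicity: a function equal to some fixed nonempty relation on the empty argument and to $\emptyset$ everywhere else satisfies it but is not monotone. Your parenthetical appeal to ``the standard reading of continuity as monotonicity together with preservation of directed unions'' assumes the conclusion. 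The fix is to read the definition as the intended equality $f(\rel) = \bigcup_{\rels \in \finpower\rel} f(\rels)$ (which is what holds for functions defined by inference rules with finitely many premises, the only ones used in the paper); under that reading monotonicity is immediate because $\rel \subseteq \relt$ gives $\finpower\rel \subseteq \finpower\relt$, and every subsequent step of your proof goes through unchanged.
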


\begin{defi}
  A function $f$ \emph{evolves} to $g$, written $f \evolve g$, if for all $\rel
  \progress \rels$, we have $f(\rel) \progress g(\rels)$. 
  A set $\setF$ of continuous functions is compatible if for all $f \in \setF$, $f \evolve
  \fid\setF^\omega$.
\end{defi}

\begin{lem}
  \label{l:properties-compatibility}
  Let $\setF$ be a compatible set, and $f \in \setF$;  $f$ is an up-to technique,
  and $f(\bisim) \mathop\subseteq \bisim$.
\end{lem}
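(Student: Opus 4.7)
The plan is to prove the standard "companion-style" result: if $\setF$ is compatible, then $\fid{\setF}^\omega(\rel)$ is a bisimulation whenever $\rel \progress f(\rel)$ for some $f \in \setF$. The conclusions of the lemma then follow easily.

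First I would establish an auxiliary fact about evolution: if $f_1 \evolve g_1$ and $f_2 \evolve g_2$, then $f_1 \comp f_2 \evolve g_1 \comp g_2$. This is immediate from the definition: if $\rel \progress \rels$, then $f_2(\rel) \progress g_2(\rels)$, and applying the first evolution to this gives $f_1(f_2(\rel)) \progress g_1(g_2(\rels))$. Similarly, arbitrary unions of evolutions yield evolutions, so from $f \evolve \fid{\setF}^\omega$ for every $f \in \setF$ I obtain $\setF \evolve \fid{\setF}^\omega$, and adding $\rawid \evolve \rawid \subseteq \fid{\setF}^\omega$ gives $\fid{\setF} \evolve \fid{\setF}^\omega$.

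The key step is then to prove $\fid{\setF}^\omega \evolve \fid{\setF}^\omega$. I would do this by induction on $n$: $\fid{\setF}^0 = \rawid \evolve \fid{\setF}^\omega$ trivially, and $\fid{\setF}^{n+1} = \fid{\setF} \comp \fid{\setF}^n \evolve \fid{\setF}^\omega \comp \fid{\setF}^\omega$ by the composition principle above. Here I need idempotence, namely $\fid{\setF}^\omega \comp \fid{\setF}^\omega \subseteq \fid{\setF}^\omega$; this follows from Lemma~\ref{l:continuity} applied to the function $\setF = \bigcup_{f\in\setF} f$, which is continuous because each $f \in \setF$ is continuous and continuity is preserved under unions (again Lemma~\ref{l:continuity}). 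Taking the union over $n$ yields $\fid{\setF}^\omega \evolve \fid{\setF}^\omega$. The main obstacle here is being careful with this idempotence bookkeeping, since progress is defined to require the relation be contained in its target, so I must track inclusions alongside evolutions.

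With this in hand, the lemma follows in a few lines. For the first part, let $f \in \setF$ and suppose $\rel \progress f(\rel)$. Since $f(\rel) \subseteq \fid{\setF}^\omega(\rel)$ and progress is preserved by enlarging the target, we have $\rel \progress \fid{\setF}^\omega(\rel)$. Applying $\fid{\setF}^\omega \evolve \fid{\setF}^\omega$ yields $\fid{\setF}^\omega(\rel) \progress \fid{\setF}^\omega(\fid{\setF}^\omega(\rel)) \subseteq \fid{\setF}^\omega(\rel)$ by idempotence, so $\fid{\setF}^\omega(\rel)$ is a bisimulation, hence included in $\bisim$. Since $\rel \subseteq \fid{\setF}^\omega(\rel)$, we conclude $\rel \subseteq \bisim$, so $f$ is an up-to technique. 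For the second part, $\bisim$ itself satisfies $\bisim \progress \bisim \subseteq f(\bisim) \subseteq \fid{\setF}^\omega(\bisim)$, so the same argument gives $f(\bisim) \subseteq \fid{\setF}^\omega(\bisim) \subseteq \bisim$.
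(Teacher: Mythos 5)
Your proof is correct and follows essentially the same route the paper takes: it is the standard compatibility argument (prove that evolution is closed under composition and union, deduce $\fid\setF^\omega \evolve \fid\setF^\omega$ by induction on $n$ using idempotence from Lemma~\ref{l:continuity}, and conclude that $\fid\setF^\omega(\rel)$ is a bisimulation), which is exactly the structure of the paper's proof of the diacritical generalization in Lemma~\ref{l:properties-compatibility-better}, of which this is the special case with $\setS=\setF$. The only blemish is the unjustified link $\bisim \subseteq f(\bisim)$ in your final chain, but it is harmless since you only need $\bisim \subseteq \fid\setF^\omega(\bisim)$, which holds because $\rawid \subseteq \fid\setF^\omega$.
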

\noindent Proving that $f$ is in a compatible set $\setF$ is easier than proving it is an
up-to technique, because we just have to prove that it evolves towards a
combination of functions in $\setF$. Besides, the second property of
Lemma~\ref{l:properties-compatibility} can be used to prove that~$\bisim$ is a
congruence just by showing that bisimulation up to context is compatible.\\

The first technique we define allows to forget about prompt names; in a
bisimulation relating $(\env, e_1)$ and $(\envd, e_2)$, we remember that
$\env_i=p$ is related to $\envd_i=q$ by their position~$i$, not by their
names. Consequently, we can apply different permutations to the two states to
rename the prompts without harm, and bisimulation \emph{up to
  permutations}\footnote{Madiot defines a bisimulation ``up to permutation'' in
  \cite{Madiot:PhD} which reorders values in a state. Our bisimulation up to
  permutations operates on prompts.} allows us to do so. It is reminiscent of
bisimulation up to renaming~\cite{Sumii:CSL09}, which operates on reference
names. Given a relation $\rel$, we define $\perm\rel$ as
$\sts\sigma_1 \perm\rel \stt\sigma_2$, assuming $\sts \rel \stt$ and $\sigma_1$,
$\sigma_2$ are any permutations.

We then allow to remove or add values from the states with,
respectively, bisimulation \emph{up to weakening} $\rawweak$ and
bisimulation \emph{up to strengthening} $\rawstr$, defined as follows
\vspace{1mm}
\begin{mathpar}
  \inferrule{(\seq v, \env, e_1) \rel (\seq w, \envd, e_2)} {(\env, e_1) \weak\rel
    (\envd, e_2)} 
  \and
  \inferrule{(\env, e_1) \rel (\envd, e_2)}
  {(\env, \inctx{\cval \mhc}{\env}, e_1) \str\rel (\envd, \inctx{\cval
      \mhc}{\envd}, e_2)}
\end{mathpar}
\vspace{1mm}

\noindent
Bisimulation up to weakening diminishes the testing power of states,
since less values means less arguments to build from for the
transitions $\ltslam i {\cval \mhc}$, $\ltsctx i \mhc$, and $\ltsstuck
\emhc$. This up-to technique is usual for environmental bisimulations,
and is called ``up to environment''
in~\cite{Sangiorgi-al:TOPLAS11}. In contrast, $\rawstr$ adds values to
the state, but without affecting the testing power, since the added
values are built from the ones already in $\env$, $\envd$.

Finally, we define the well-known bisimulation up to context, which
allows to factor out a common context when comparing terms. As usual
for environmental bisimulations~\cite{Sangiorgi-al:TOPLAS11}, we
define two kinds of bisimulation up to context, depending whether we
operate on values or any expressions. For values, we can factor out
any common context $\mhc$, but for expressions that are not values, we
can factor out only an evaluation context $\emhc$, since factoring out
any context in that case would lead to an unsound up-to
technique~\cite{Madiot:PhD}. We define up to context for values
$\rawutctxv$ and for any expression $\rawutctx$ as follows:
\vspace{1mm}
\begin{mathpar}
\inferrule{\env \rel \envd}
{(\env, \inctx{\mhc}{\env}) \utctxv\rel (\envd, \inctx{\mhc}{\envd})}
\and
\inferrule{(\env, e_1) \rel (\envd, e_2)}
{(\env, \inctx{\emhc}{e_1, \env}) \utctx\rel (\envd, \inctx{\emhc}{e_2, \envd})}
\end{mathpar}
\vspace{1mm}

\begin{lem}
  The set $\{\rawperm, \rawweak, \rawstr, \rawutctxv, \rawutctx \}$ is
  compatible. 
\end{lem}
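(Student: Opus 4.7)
The plan is to verify compatibility by checking, for each $f$ in the set $\setF = \{\rawperm, \rawweak, \rawstr, \rawutctxv, \rawutctx\}$, that $f \evolve \fid\setF^\omega$: concretely, I fix $\rel \progress \rels$, take a pair related by $f(\rel)$, and for each label $\act$ show that transitions can be matched so that the resulting states lie in $\fid\setF^\omega(\rels)$. Continuity of each function is immediate since each is defined by a single inference rule with finitely many premises, so by Lemma~\ref{l:continuity} it then suffices to produce compositions of elements of $\setF$ (plus identity) on the matching side. The structure of the argument is one case per up-to-technique, with a sub-case analysis on the shape of the outgoing transition in Figure~\ref{fig:lts}.

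For $\rawperm$, if $\sts\sigma_1 \perm\rel \stt\sigma_2$ originates from $\sts \rel \stt$, then a transition from $\sts\sigma_1$ can be pulled back along $\sigma_1^{-1}$ to a transition from $\sts$ via Lemma~\ref{l:red-perm} (for $\tau$ and for $\ltsstuck \emhc$) and Lemma~\ref{l:fresh-perm} (for $\ltsnuprpt$ and $\ltsprpt i j$); the match provided by $\rel \progress \rels$ from $\stt$ is then pushed forward through $\sigma_2$, landing in $\perm\rels$. For $\rawweak$, a transition of the stripped state $(\env,e_1)$ is also a legal transition of the extended state $(\seq v,\env,e_1)$ because the prefix $\seq v$ does not block any reduction nor invalidate any freshness check whose indices point into $\env$; the match on the other side is then re-stripped by $\rawweak$ itself. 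For $\rawstr$, the extra values $\inctx{\cval\mhc}\env$ and $\inctx{\cval\mhc}\envd$ are both built from $\env$, $\envd$, so any $\ltslam$, $\ltsctx$, $\ltsprpt{i}{j}$, or $\ltsstuck$ transition that targets one of the fresh indices can be rewritten as a transition of the smaller state whose context parameter has been enlarged by inlining $\cval\mhc$; progress of $\rel$ on the smaller state then supplies a match, and $\rawstr$ reintroduces the extra slots around the resulting pair.

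The main work is in the two up-to-context cases, by case analysis on where the redex sits. For $\rawutctxv$, on $\env \rel \envd$ with $(\env, \inctx\mhc\env) \utctxv\rel (\envd, \inctx\mhc\envd)$ the state is a values-only state, so only $\ltsval$, $\ltsprpt{i}{j}$, $\ltslam$, $\ltsctx$, $\ltsnuprpt$, and $\ltsstuck$ apply; each transition on one of the fresh values $\inctx\mhc\env$ factors, via $\rawstr$ (to move $\mhc$ into the testing context) and the corresponding transition on $\env$, producing a match in $(\fid\rawstr \comp \fid\rawutctxv)^\omega(\rels)$. For $\rawutctx$, given $(\env,e_1)\rel(\envd,e_2)$ and $(\env, \inctx\emhc{e_1,\env}) \utctx\rel (\envd, \inctx\emhc{e_2,\envd})$, a $\lts\tau$-transition either fires in $e_1$ (handled by $\rel$'s progress and re-wrapping by $\emhc$ via $\rawutctx$) or fires inside $\emhc$ using surrounding values, which re-combines via $\rawutctx$ without touching $\rel$; the freshness side condition is handled by first applying Lemma~\ref{l:red-perm} through $\rawperm$ to push any generated prompt away from $\promptsOf \env$. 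The frontier case, where $e_1$ reaches a value and $\emhc$ then consumes it, is routed through the $\ltsval$ transition of $(\env,e_1)$, yielding values-only states that we recombine with $\rawutctxv$.

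The hard part is the $\ltsstuck \emhc'$ transition under $\rawutctx$: the stuck subterm $\prWithSC{\prConst p}{x}{e}$ may lie inside $e_1$, inside $\emhc$, or inside a captured continuation stored in $\env$, and the outer test $\emhc'$ composes with $\emhc$ before triggering capture. I expect to decompose the resulting post-capture expression into a new multi-hole context applied to a state of the form $(\env, \inctx{\emhc_0}{e_1',\env})$ reached by a $\ltsstuck$-transition of the underlying $\rel$-pair, matching on the other side by the same decomposition and closing under $\fid\setF^\omega$ using $\rawutctx$, $\rawutctxv$, $\rawstr$, and $\rawperm$. This is exactly the situation foreshadowed by the rule-shape discussion before Example~\ref{ex:stuck-utctx}, and it is where continuity matters, since the decomposition may introduce several layers of contexts that must be absorbed by iterating $\fid\setF$.
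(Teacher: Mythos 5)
Your overall strategy---continuity is immediate from the rule-based definitions, and compatibility is checked by showing $f \evolve \fid\setF^\omega$ for each $f$ via a case analysis on the outgoing transition---is the right one and is the one followed in the companion report (the paper itself defers this proof). However, there are two concrete problems. The more serious one is your treatment of $\rawutctxv$: you assert that a state $(\env, \inctx\mhc\env)$ produced by $\rawutctxv$ is values-only, but $\mhc$ ranges over \emph{general} multi-hole contexts, so $\inctx\mhc\env$ is typically a non-value expression and $\lts\tau$ transitions very much apply. Analysing those $\lts\tau$ steps---classifying the redex assembled from $\mhc$ and $\env$ as a $\beta$-redex whose function is some $\env_i$ (matched via $\ltslam i {\cval{\mhc'}}$ on $\env \rel \envd$), a throw to a continuation $\env_i$ (matched via $\ltsctx i {\mhc'}$), a delimiter/capture or prompt-comparison step hinging on whether $\env_i = \env_j$ (matched via $\ltsprpt i j$), a fresh-prompt generation (matched via $\ltsnuprpt$), or a redex entirely literal in $\mhc$---is the bulk of the whole compatibility proof; it is exactly where the hypothesis that the underlying pair is values-only gets used, and why $\rawutctxv$ may carry a general $\mhc$ while $\rawutctx$ may not. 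Your sketch omits this case entirely, so as written the argument for $\rawutctxv$ does not establish evolution.

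The second problem is smaller but your stated justification is false: for $\rawweak$, a $\lts\tau$ or $\ltsnuprpt$ transition of the stripped state $(\env,e_1)$ is \emph{not} automatically a transition of $(\seq v,\env,e_1)$, because the side conditions $\fresh{e_1'}{e_1}{\env}$ and $p\notin\promptsOf\env$ do not imply freshness with respect to $\seq v$ as well. The repair is routine---rename the generated prompt away from $\promptsOf{\seq v}$ using Lemma~\ref{l:red-perm} before invoking the progress of $\rel$, and then undo the renaming, so the matching pair lands in $\perm{(\weak{\rels})}$ rather than $\weak{\rels}$---but it must be said, since it is precisely why $\rawperm$ needs to be in the compatible set at all. (A minor remark on $\rawutctx$: when $e_1$ is a non-value it occupies the evaluation position of $\inctx\emhc{e_1,\env}$, so a redex cannot fire ``inside $\emhc$''; that situation only arises after $e_1$ reaches a value, which you correctly route through $\ltsval$, so nothing is lost there.) Your handling of $\rawperm$, $\rawstr$, and the $\ltsstuck{\emhc'}$ analysis under $\rawutctx$ is at the right level of detail and consistent with the intended proof.
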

\noindent The function $\rawutctx$ particularly helps in dealing with stuck
terms, as we can see below.

\begin{ex}
  \label{ex:stuck-utctx}
  Let $\sts \defeq (\env, \prWithSC p x {e_1})$ and $\stt \defeq (\envd,
  \prWithSC q x {e_2})$ (for some $e_1$, $e_2$), so that $\sts \rel \stt$. If
  $p$ and $q$ are not in $\env$, $\envd$, then the two expressions remain stuck,
  as we have $\sts \ltsstuck \emhc (\env, \inctx \emhc {\prWithSC p x {e_1},
    \env})$ and similarly for $\stt$. We have directly $(\env, \inctx \emhc
  {\prWithSC p x {e_1}, \env}) \utctx \rel (\envd, \inctx \emhc {\prWithSC q x
    {e_2}, \envd})$. Otherwise, the capture can be triggered with a context
  $\emhc$ of the form $\inctx{\emhc_1}{\prReset{\hole_i}{\emhc_2}}$, giving
  $\sts \ltsstuck \emhc (\env, \inctx {\emhc_1}{\subst {e_1} x
    {\reifCtx{\inctx{\emhc_2}{\env}}}, \env})$ and $\stt \ltsstuck \emhc (\envd,
  \inctx {\emhc_1}{\subst {e_2} x {\reifCtx{\inctx{\emhc_2}{\envd}}},
    \envd})$. Thanks to $\rawutctx$, we can forget about $\emhc_1$ which does
  not play any role, and continue the bisimulation proof by focusing only on
  $(\env, \subst {e_1} x {\reifCtx{\inctx{\emhc_2}{\env}}})$ and $(\envd, \subst
  {e_2} x {\reifCtx{\inctx{\emhc_2}{\envd}}})$.
\end{ex}

Because bisimulation up to context is compatible,
Lemma~\ref{l:properties-compatibility} ensures that $\bisim$ is a
congruence w.r.t. all contexts for values, and w.r.t. evaluation
contexts for all expressions. As a corollary, we can deduce that
$\bisim$ is sound w.r.t. $\ectxEq$; we can also prove that it is
complete w.r.t. $\ectxEq$, leading to the following full
characterization result.

\begin{thm}
  \label{t:charac-std}
  $e_1 \ectxEq e_2$ iff $(\emptyset, e_1) \bisim (\emptyset, e_2)$.
\end{thm}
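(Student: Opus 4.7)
The plan is to prove each implication separately, drawing on the compatibility result stated just above.

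\emph{Soundness.} Suppose $(\emptyset, e_1) \bisim (\emptyset, e_2)$. Since $\{\rawperm, \rawweak, \rawstr, \rawutctxv, \rawutctx\}$ is compatible, Lemma~\ref{l:properties-compatibility} gives $\rawutctx(\bisim) \subseteq \bisim$, so for every promptless evaluation context $E$ we have $(\emptyset, \inctx{E}{e_1}) \bisim (\emptyset, \inctx{E}{e_2})$. I would then prove the auxiliary statement that whenever $(\emptyset, e) \bisim (\emptyset, e')$ one has $e \eqObs e'$: a reduction sequence $\redrtc{e}{v}$ corresponds to a trace $(\emptyset, e) \rtc{\lts\tau} (\emptyset, v)$ which by bisimilarity must be matched by a weak trace of $(\emptyset, e')$ reaching a value (otherwise a subsequent $\ltsval$ on the value side would have no match); analogous reasoning via $\ltsstuck$ handles the control-stuck case, and the divergence/error clause follows by contrapositive.

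\emph{Completeness.} I would exhibit a candidate bisimulation $\rel$ containing $\{((\emptyset, e_1), (\emptyset, e_2)) : e_1 \ectxEq e_2\}$. A natural definition places $(\env, e) \rel (\envd, e')$ (with $e, e'$ closed and $\env, \envd$ sequences of closed values of equal length) whenever for every promptless multi-hole evaluation context $\emhc$ with valid indices one has $\inctx{\emhc}{e, \env} \eqObs \inctx{\emhc}{e', \envd}$; similarly for value-only states $\env \rel \envd$ using $\mhc$ in place of $\emhc$. The initial pair $((\emptyset, e_1), (\emptyset, e_2))$ belongs to $\rel$ by unpacking the definition of $\ectxEq$, since with an empty environment the multi-hole contexts degenerate to ordinary promptless evaluation contexts. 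Verifying that $\rel$ is a bisimulation proceeds by case analysis on the LTS transitions: $\lts\tau$ follows from reduction-preservation of $\eqObs$; $\ltslam i {\cval\mhc}$ and $\ltsctx i \mhc$ are handled by specializing the test context to force the appropriate application or throw; $\ltsnuprpt$ is matched by generating a fresh prompt on the other side (relying on the freshness side-condition in the LTS); $\ltsprpt i j$ uses the $\rawprEqual$ encoding to make prompt equality observable; $\ltsval$ is automatic on value-only states; and $\ltsstuck \emhc$ matches directly from the universal quantification in the definition of $\rel$.

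The main obstacle will be the $\ltsprpt$ and $\ltsstuck$ cases. For $\ltsprpt i j$, I need to show that if $\env_i = \env_j$ then necessarily $\envd_i = \envd_j$: the argument is to instantiate the quantification in $\rel$ with a test context built from $\rawprEqual$ applied to holes $i$ and $j$, which evaluates to $\constTrue$ on the left but would evaluate to $\constFalse$ on the right whenever $\envd_i \neq \envd_j$, contradicting $\eqObs$; dually one shows that distinct prompts remain distinct and that positions of $\lambda$-abstractions and captured contexts are matched by values of the same shape. For $\ltsstuck \emhc$, one must argue that the residual states after a stuck-step (which may or may not reduce further depending on which prompts of $\env$ occur in $\emhc$) remain in $\rel$; this requires arguing that any further test context $\emhc'$ on a residual corresponds to a combined test context on the original states, obtained by composing $\emhc'$ with $\emhc$. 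Aside from these cases, most of the verification is routine once the candidate $\rel$ is set up correctly.
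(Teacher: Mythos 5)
Your proposal follows essentially the same route as the paper: soundness via compatibility of bisimulation up to context (hence congruence of $\bisim$ w.r.t.\ evaluation contexts) together with an adequacy argument, and completeness by showing that the relation $\{((\env,e_1),(\envd,e_2)) \mid \forall \emhc,\ \inctx{\emhc}{e_1,\env} \eqObs \inctx{\emhc}{e_2,\envd}\}$ is a bisimulation. The only notable difference is that the paper checks this candidate as a bisimulation \emph{up to permutation}, which absorbs the mismatch of freshly generated prompt names arising in the $\lts\tau$ and $\ltsnuprpt$ cases that your direct verification would otherwise have to handle by an explicit renaming argument.
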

\noindent For completeness, we prove that \( \{ (\env, e_1), (\envd, e_2) \bnfor
\forall \emhc, \inctx \emhc {e_1, \env} \eqObs \inctx \emhc {e_2,
  \envd} \} \) is a bisimulation up to permutation.
%% ; the proof is in Appendix~\ref{a:std-complete}.

\begin{rem}
  \label{r:stuck-error-standard}
  If we consider control-stuck terms as errors, as suggested in
  Remark~\ref{r:stuck-error}, then a control-stuck term that cannot be
  unstuck can be related to a term that reduces to another kind of
  error or that diverges. To take this into account, we would change
  the transition $\ltsstuck \emhc$ as follows:
  \begin{mathpar}
    \inferrule{\inctx{\emhc}{e, \env} \redgrab e'}
              {(\env, e) \ltsstuck{\emhc} (\env, e')}
  \end{mathpar}
  where the relation $\redgrab$ differs from $\redbis$ in that it enforces only
  the continuation-grabbing reduction step, if possible. Note that this
  transition is useless when comparing two states $(\env, e_1)$ and
  $(\envd, e_2)$ where neither $e_1$ nor $e_2$ is stuck, but in that case, we
  obtain $(\env, \inctx \emhc {e_1, \env})$ and
  $(\envd, \inctx \emhc {e_2, \envd})$, which are directly related by $\rawutctx$.
% We could use the relation $\redbis$ instead of $\redgrab$
  % here, but the resulting definition of bisimulation would become less
  % practical -- the transition would needlessly require one to consider
  % all reduction rules instead of just the one for which it is
  % meant. As a matter of fact, already the relation $\redbis$ is used
  % only when the continuation-grabbing reduction step can take place.

  With such a change, the results of this section remain valid with respect to
  the notion of contextual equivalence defined in
  Remark~\ref{r:stuck-error}. The proofs are almost the same, since the extra
  cases involving the $\ltsstuck \emhc$ transition applied to expressions that
  are not control-stuck can be dealt with using $\rawutctx$, as explained
  above. 
\end{rem}

\subsection{Example}
\label{ss:example-std}

As an example, we show a folklore theorem about delimited
control~\cite{Biernacki-Danvy:JFP06}, stating that the static
operators \textshift and \textreset can be simulated by the
dynamic operators $\mathsf{control}$ and $\mathsf{prompt}$. In fact, what
we prove is a more general and stronger result than the original
theorem, since we demonstrate that this simulation still holds when
multiple prompts are around.

\begin{ex}[Folklore theorem]
  \label{ex:folklore}
  We encode \textshift, \textreset, $\mathsf{control}$, and
  $\mathsf{prompt}$ as follows
\[
\begin{array}{lcllcl}
\mathsf{shift}_p &\defeq &
    \lam{f}{\prWithSC{p}{k}{\prReset{p}{\app{\var{f}{
        (\lam{y}{\prReset{p}{\prPushSC{\var{k}}{\var{y}}}})}}}}} \hspace*{2em}&
\mathsf{control}_p &\defeq &
    \lam{f}{\prWithSC{p}{k}{\prReset{p}{\app{\var{f}{
        (\lam{y}{\prPushSC{\var{k}}{\var{y}}}})}}}} \\

\mathsf{reset}_p &\defeq & \reifCtx{\prRstectx{p}{\mtectx}} & \mathsf{prompt}_p
&\defeq & \reifCtx{\prRstectx{p}{\mtectx}} 
\end{array}
\]
Let $\mathsf{shift'}_p \defeq \lam{f}{\app{\mathsf{control}_p}{(\lam{l}{
      \app{\var{f}}{(\lam{z}{\prPushSC{\mathsf{prompt}_p}{
            \app{\var{l}}{\var{z}}}})}})}}$; we prove that \iftoggle{paper}{the pair}{}
$(\mathsf{shift}_p, \mathsf{reset}_p)$ (encoded as
$\lam{f}{\app{\app{\var{f}}{\mathsf{shift}_p}}{\mathsf{reset}_p}}$) is bisimilar
to $(\mathsf{shift'}_p, \mathsf{prompt}_p)$ (encoded as
$\lam{f}{\app{\app{\var{f}}{\mathsf{shift'}_p}}{\mathsf{prompt}_p}}$).
\end{ex}

\begin{proof}
  We iteratively build a relation $\rel$ closed under $(\promptCheckRule)$ such
  that $\rel$ is a bisimulation up to context, starting with
  $(p,\mathsf{shift}_p) \rel (p,\mathsf{shift'}_p)$. The transition
  $\ltsprpt{1}{1}$ is easy to check. For $\ltslam{2}{\cval\mhc}$, we obtain
  states of the form $(p,\mathsf{shift}_p, e_1)$, $(p,\mathsf{shift'}_p,
  e_2)$ that we add to $\rel$, where~$e_1$ and $e_2$ are the terms below
\vspace{1mm}
\begin{mathpar}
\inferrule{\env \rel \envd}
{(\env,\prWithSC{p}{k}\prReset{p}\app{\inctxe{\cval\mhc}\env}{
    (\lam{y}\prReset{p}\prPushSC{\var{k}}\var{y})}) \rel
 (\envd,\prWithSC{p}{k}\prReset{p}
 \app{(\lam{l}\app{\inctxe{\cval\mhc}\envd}
 {(\lam{z}{\prPushSC{\mathsf{prompt}_p}{
   \app{\var{l}}{\var{z}}}})})}
 {(\lam{y}{\prPushSC{\var{k}}\var{y}})})}
\end{mathpar}
\vspace{1mm}

\noindent
We use an inductive, more general rule, because we want $\ltslam{2}{\cval\mhc}$
to be still verified after we extend $(p,\mathsf{shift}_p)$ and
$(p,\mathsf{shift'}_p)$. The terms $e_1$ and $e_2$ are stuck, so we test them
with $\ltsstuck\emhc$. If $\emhc$ does not trigger the capture, we obtain
$\inctx \emhc{e_1,\env}$ and $\inctx \emhc {e_2,\envd}$, and we can use
$\rawutctx$ to conclude. Otherwise, $\emhc =
\inctx{\emhc'}{\prRstectx{\hole_1}{\emhc''}}$ (where $\prRstectx{\hole_1}{}$
does not surround $\hole$ in $\emhc''$), and we get
\[
\inctxe{\emhc'}{\prReset{p}\app{\inctxe{\cval\mhc}\env}{
    (\lam{y}\prReset{p}\prPushSC{\reifCtx{\inctxe{\emhc''}{\env}}}\var{y}), \env}}
\mbox{ and }
\inctxe{\emhc'}{\prReset{p}\app{\inctxe{\cval\mhc}\envd}{
    (\lam{z}\prPushSC{\mathsf{prompt}_p}\app{
        (\lam{y}\prPushSC{\reifCtx{\inctxe{\emhc''}{\envd}}}\var{y})}{
        \var{z}}), \envd}}
\]
We want to use $\rawutctxv$ to remove the common context
$\inctx{\emhc'}{\prRstectx{\hole_1}\valectx{\cval\mhc}{\hole_i}}$, which means that we
have to add the following states in the definition of $\rel$ (again, inductively):
\vspace{1mm}
\begin{mathpar}
\inferrule{\env\rel\envd}
    {(\env, 
      \lam{y}\prReset{p}\prPushSC{\reifCtx{\inctxe{\emhc''}{\env}}}\var{y})
    \rel
    (\envd,
      \lam{z}\prPushSC{\mathsf{prompt}_p}\app{
        (\lam{y}\prPushSC{\reifCtx{\inctxe{\emhc''}{\envd}}}\var{y})}{
        \var{z}})}
\end{mathpar}
\vspace{1mm}

\noindent
Testing these functions with $\ltslam{i}{\cval\mhc}$ gives on both sides states
where $\prReset{\hole_1}\inctxe{\emhc''}{\cval\mhc}$ can be removed with
$\rawutctxv$. Because \( (\varnothing,
\lam{f}{\app{\app{\var{f}}{\mathsf{shift}_p}}{\mathsf{reset}_p}})
\weak{\utctxv{\rel}} (\varnothing,
\lam{f}{\app{\app{\var{f}}{\mathsf{shift'}_p}}{\mathsf{prompt}_p}}) \), it is
enough to conclude. Indeed, $\rel$ is a bisimulation up to context, so $\rel
\mathop\subseteq \bisim$, which implies $\weak{\utctxv\rel} \mathop\subseteq
\weak{\utctxv\bisim}$
(because $\rawweak$ and $\rawutctxv$ are monotone), and $\weak{\utctxv{\bisim}}
\mathop\subseteq \bisim$ (by Lemma~\ref{l:properties-compatibility}). Note that this
reasoning works for any combination of monotone up-to techniques and any
bisimulation (up-to).
\end{proof}

What makes the proof of Example~\ref{ex:folklore} quite simple is that we relate
$(p, \mathsf{shift}_p)$ and $(p, \mathsf{shift'}_p)$, meaning that $p$ can be
used by an outside observer. But the control operators $(\mathsf{shift}_p,
\mathsf{reset}_p)$ and $(\mathsf{shift'}_p, \mathsf{prompt}_p)$ should be the
only terms available for the outside, since~$p$ is used only to implement them.
If we try to prove equivalent these two pairs directly, i.e., while keeping $p$
private, then 
testing $\mathsf{reset}_p$ and $\mathsf{prompt}_p$ with $\ltsctx{2}{\mhc}$
requires a cumbersome analysis of the behaviors of $\prReset p
{\inctx{\mhc}{\env}}$ and $\prReset p {\inctx{\mhc}{\envd}}$.  In the next
section, we define a new kind of bisimilarity with a powerful up-to technique to
make such proofs more tractable.

% =============================================================================
\section{The $\star$-Bisimilarity}
\label{s:better}

In this section we develop a refined version of bisimilarity along
with a powerful up to context technique for the $\lambdabla$-calculus
that relies on testing captured continuations with values only,
instead of with arbitrary expressions. In order to account for such an
enhancement we generalize Madiot's framework.

\subsection{Motivation}
\label{ss:motivation}

Let us start with identifying some drawbacks of the existing
environmental bisimulation techniques for control operators, such as
the one of Section~\ref{s:standard} and the ones
of~\cite{Biernacki-Lenglet:APLAS13,Biernacki-al:HAL15}, in the way
captured contexts are tested and exploited.

\subsubsection*{Testing continuations.} In Section~\ref{s:standard}, a
continuation $\env_i = \reifCtx E$ is tested with
$\env \ltsctx i \mhc (\env, \inctx{E}{\inctx \mhc \env})$. We then have to study
the behavior of $\inctx E {\inctx \mhc \env}$, which depends primarily on how
$\inctx \mhc \env$ reduces; e.g., if $\inctx \mhc \env$ diverges, then $E$ does
not play any role. Consequently, the transition $\ltsctx i \mhc$ does not really
test the continuation directly, since we have to reduce $\inctx \mhc \env$
first.  To really exhibit the behavior of the continuation, we change the
transition so that it uses a value context instead of a general one. We then
have $\env \ltsctx i {\cval \mhc} (\env, \inctx{E}{\inctx {\cval \mhc} \env})$,
and the behavior of the term we obtain depends primarily on $E$. However, this
is not equivalent to testing with $\mhc$, since $\inctx \mhc \env$ may interact
in other ways with $E$ if $\inctx \mhc \env$ is a stuck term. If~$E$ is of the
form $\inctx{E'}{\prReset p {E''}}$ with $p \notin \surPr{E''}$, and $p$ is in
$\env$, then $\mhc$ may capture~$E''$, since $p$ can be used to build an
expression of the form $\prWithSC p x e$. To take into account this possibility,
we introduce a new transition
$\env \ltsdecomp i j (\env, \reifCtx{E'}, \reifCtx{E''})$, which decomposes
$\env_i=\inctx{E'}{\prReset p {E''}}$ into $\reifCtx{E'}$ and~$\reifCtx{E''}$,
provided $\env_j=p$. The stuck term $\inctx \mhc \env$ may also capture $E$
entirely, as part of a bigger context of the form
$\inctx {\emhc_1}{\inctx E {\emhc_2}}$. To take this into account, we introduce
a way to build such contexts using captured continuations. This is also useful
to make bisimulation up to context more expressive, as we explain in the next
paragraph.

\subsubsection*{A more expressive bisimulation up to context.} As we already
pointed out in~\cite{Biernacki-Lenglet:APLAS13,Biernacki-al:HAL15}, bisimulation
up to context is not very helpful in the presence of control operators. For
example, suppose we prove the $\beta_\Omega$ axiom
of~\cite{Kameyama-Hasegawa:ICFP03}, i.e., $\app{(\lam x {\inctx E x})}{e}$ is
equivalent to $\inctx E e$ if $x \notin \freeVars E$ and
$\surPr{E} = \emptyset$. If~$e$ is a stuck term $\prWithSC p y {e_1}$, we have
to compare
$\subst {e_1} y {\reifCtx{\inctx {E_1}{\app{(\lam x {\inctx E x})} \hole}}}$ and
$\subst {e_1} y {\reifCtx{\inctx {E_1}{E}}}$ for some $E_1$. If
\iftoggle{paper}{$e_1$ is of the form
  $\prPushSC y {(\prPushSC y {e_2})}$}{$e_1=\prPushSC y {(\prPushSC y {e_2})}$},
then we get respectively
\[\inctx {E_1}{\app{(\lam x {\inctx E x})}{\inctx{E_1}{\app{(\lam x {\inctx E
          x})}{e_2}}}}
\text{ and }\inctx {E_1}{\inctx E {\inctx{E_1}{\inctx E {e_2}}}}.\]
We can see that the
two resulting expressions have the same shape, and yet we can only remove the
outermost occurrence of $E_1$ with $\rawutctx$. The problem is that bisimulation
up to context can factor out only a \emph{common} context. We want an up-to
technique able to identify \emph{related} contexts, i.e., contexts built out of
related continuations. To do so, we modify the multi-hole contexts to include a
construct $\appcont{\holecont_i}{\mhc}$ with a special hole $\holecont_i$, which
can be filled only with $\reifCtx E$ to produce a context $\inctx E \mhc$. As a
result, if $\env = (\reifCtx {\app{(\lam x {\inctx E x})} \hole})$ and
$\envd = (\reifCtx E)$, then
$\inctx {E_1}{\app{(\lam x {\inctx E x})}{\inctx{E_1}{\app{(\lam x {\inctx E
          x})} \hole}}}$
and $\inctx {E_1}{\inctx E {\inctx{E_1}{\inctx E \hole}}}$ can be written
$\inctx \emhc \env$, $\inctx \emhc \envd$ with
$\emhc = \inctx {E_1}{\appcont{\holecont_1}{\inctx {E_1}{\appcont{\holecont_1}
      \hole}}}$. We can then focus only on testing $\env$ and $\envd$.

However, such a bisimulation up to related context would be unsound if
not restricted in some way. Indeed, let $\reifCtx{E_1}$,
$\reifCtx{E_2}$ be any continuations, and let $\env =
(\reifCtx{E_1})$, $\envd = (\reifCtx{E_2})$. Then the transitions
$\env \ltsctx 1 {\cval \mhc}(\env, \inctx {E_1}{\inctx {\cval \mhc}
  \env})$ and $\envd \ltsctx 1 {\cval \mhc} (\envd, \inctx
{E_2}{\inctx {\cval \mhc} \envd})$ produce states of the form $(\env,
\inctx {\mhc} \env)$, $(\envd, \inctx {\mhc} \envd)$ with $\mhc =
\appcont {\holecont_1}{\cval \mhc}$. If bisimulation up to related
context was sound in that case, it would mean that $\reifCtx{E_1}$ and
$\reifCtx{E_2}$ would be bisimilar for all $E_1$ and~$E_2$, which, of
course, is wrong.\footnote{The problem is similar if we test
  continuations using contexts $\mhc$ (as in Section~\ref{s:standard})
  instead of $\cval \mhc$.} To prevent this, we distinguish
\emph{passive} transitions (such as $\ltsctx i {\cval \mhc}$) from the
other ones (called \emph{active}), so that only selected up-to
techniques (referred to as \emph{strong}) can be used after a passive
transition. In contrast, any up-to technique (including this new
bisimulation up to related context) can be used after an active
transition. To formalize this idea, we have to extend Madiot et al.'s
framework to allow such distinctions between transitions and between
up-to techniques.

\subsection{Labeled Transition System and Bisimilarity}

First, we explain how we alter the LTS of Section~\ref{ss:LTS} to
implement the changes we sketched in Section~\ref{ss:motivation}. We
extend the grammar of multi-hole contexts~$\mhc$ (resp. $\emhc$) as follows: 

\vspace{2mm}
\noindent\begin{tabularx}{\linewidth}{@{\hspace{\parindent}}rlXr@{}}
$\mhc$ & $\bnfdef$ & $\cval \mhc \bnfor \app \mhc \mhc \bnfor \prFresh{x}{\mhc} \bnfor 
  \prReset{\cval\mhc} \mhc \bnfor \prWithSC{\cval \mhc}{x}{\mhc} \bnfor
  \prPushSC{\cval \mhc}{\mhc} \bnfor \appcont {\holecont_i} \mhc$ 
  & (contexts) \\[1mm]
$\emhc$ & $\bnfdef$ & $\mtectx \bnfor \argectx{\emhc}{\mhc} \bnfor
\valectx{\cval \mhc}{\emhc} \bnfor \prRstectx{\hole_i}{\emhc} \bnfor \appcont {\holecont_i} \emhc$
  & (evaluation contexts)
\end{tabularx}
\vspace{2mm}

\noindent The grammar of value contexts $\cval \mhc$ is unchanged. The hole
$\holecont_i$ can be filled only with a continuation; when we write
$\inctx {(\appcont {\holecont_i} \mhc)} \env$, we assume $\env_i$ is a
continuation $\reifCtx E$, and the result of the operation is
$\inctx E {\inctx \mhc \env}$ (and similarly for $\emhc$).

We also change the way we deal with captured contexts, by replacing the rule
for~$\ltsctx i \mhc$ with the two following rules---we otherwise keep unchanged
the other transitions of Figure~\ref{fig:lts}:
\vspace{1mm}
\begin{mathpar}
  \inferrule{\env_i = \reifCtx E}
  {\env \ltsctx i {\cval \mhc} (\env, \inctx E {\inctxe {\cval \mhc} \env})}
  \and
  \inferrule{\env_i = \reifCtx{\inctx{E_1}{\prReset p {E_2}}} \\ \env_j = p \\ p\notin\surPr{E_2}}
  {\env \ltsdecomp i j (\env, \reifCtx{E_1}, \reifCtx{E_2})}
\end{mathpar}
\vspace{1mm}

\noindent 
The transition $\ltsctx i {\cval \mhc}$ is the same as in
Section~\ref{s:standard}, except that it tests with an argument built
with a value context $\cval \mhc$ instead of a regular context
$\mhc$. We also introduce the transition $\ltsdecomp i j$, which
decomposes a captured context $\reifCtx{\inctx{E_1}{\prReset p
    {E_2}}}$ into sub-contexts $\reifCtx{E_1}$, $\reifCtx{E_2}$,
provided that $p$ is in $\env$. This transition is necessary to take
into account the possibility for an external observer to capture a
part of a context, scenario which can no longer be tested with
$\ltsctx i {\cval \mhc}$, as explained in Section~\ref{ss:motivation},
and as illustrated with the next example.
\begin{ex}
  Let $\env = (p, \reifCtx{\prReset p \hole})$, $\envd = (q,
  \reifCtx{\hole})$; then $\env \ltsctx 2 {\cval \mhc} (\env, \prReset
  p {\inctx {\cval \mhc} \env}) \lts\tau (\env, \inctx {\cval \mhc}
  \env)$ and $\envd \ltsctx 2 {\cval \mhc} (\envd, \inctx {\cval \mhc}
  \envd)$. Without the $\ltsdecomp i j$ transition, $\env$ and $\envd$
  would be bisimilar, which would not be sound (they are distinguished
  by the context $\prPushSC {\hole_2}{\prWithSC {\hole_1} x \Omega}$).
\end{ex}
The other rules are not modified, but
their meaning is still affected by the change in the contexts grammars: the
transitions $\ltslam i {\cval \mhc}$ and $\ltsstuck \emhc$ can now test with
more arguments. This is a consequence of the fact that an observer can build a
bigger continuation from a captured context. For instance, if $\env = (p,
\reifCtx E, \lam x {\prPushSC x v})$, then with the LTS of
Section~\ref{s:standard}, \iftoggle{paper}{we have}{} $\env \ltsctx 2 {\inctx {\emhc_1}{\prWithSC
    {\hole_1} x x}} \ltsstuck {\prReset {\hole_1}{\emhc_2}} % (\env,
% \reifCtx{\inctx {\emhc_1}{\inctx E {\inctx{\emhc_2}\env}, \env}})
\ltslam 3 {\hole_4} (\env,
\reifCtx{\inctx {\emhc_1}{\inctx E {\inctx{\emhc_2}\env}, \env}}, \prPushSC {\reifCtx{\inctx
    {\emhc_1}{\inctx E {\inctx{\emhc_2}\env}, \env}}} v)$. In the new LTS, the first transition is
no longer possible, but we can still test the $\lambda$-abstraction with the
same argument using $\env \ltslam 3 {\inctx {\emhc_1}{\appcont
    {\holecont_2}{\emhc_2}}} (\env, \prPushSC {\reifCtx{\inctx {\emhc_1}{\inctx
      E {\inctx{\emhc_2}\env}, \env}}} v)$.\\

As explained in Section~\ref{ss:motivation}, we want to prevent the use of some
up-to techniques (like the bisimulation up to related context we introduce in
Section~\ref{ss:up-to-better}) after some transitions, especially
$\ltsctx i {\cval \mhc}$. To do so, we distinguish the \emph{passive}
transitions $\ltsctx i {\cval \mhc}$, $\ltsval$ from the other ones, called
\emph{active}. A passive transition $\sts_1 \lts\act \sts_2$ can be inverted by
an up-to technique, which is possible if no new information is generated between
the states $\sts_1$ and $\sts_2$. For example, the transition
$\env \ltsval \env$ is passive, as we already know that $\env$ is composed only
of values. In contrast, the transition $\env \ltsprpt i j \env$ is active, as we
gain some information: the prompts $\env_i$ and $\env_j$ are equal.  The
transition $\env \ltsctx i {\cval \mhc} (\env, e)$ is passive at it simply
recombines existing information in $\env$ to build $e$, without any reduction
step taking place, and thus without generating new information. Some extra
knowledge is produced only when $(\env, e)$ evolves (with active transitions),
as it then tells us how the tested context~$\env_i$ actually interacts with the
value constructed from $\cval \mhc$. Finally, $\ltslam i {\cval \mhc}$ and
$\ltsstuck \emhc$ correspond to reduction steps and are therefore active, and
$\ltsdecomp i j$ is also active as it provides some information by telling us
how to decompose a continuation.

% active if it mimics a reduction step, or it provides information about the
% tested terms that go beyond their kind. Otherwise, the transition is passive. In
% the LTS of this section, $\ltslam i {\cval \mhc}$ and $\ltsstuck \emhc$
% correspond to reduction rules, as the former is a function application, and the
% latter a context capture. The transition $\ltsprpt i j$ does not simply say that
% two terms are prompt, but also that they are equal; similarly, $\ltsdecomp i j$
% tells us how to decompose a continuation. These transitions therefore provide
% information on the tested terms themselves, and not only their kind (prompt or
% continuation). In contrast, $\ltsval$ simply says that all the terms in the
% state are values, and $\ltsctx i {\cval \mhc}$ does not correspond to a
% reduction step, as we throw a value, and not any expression; these two
% transitions are therefore passive.

With this distinction, we change the definition of progress, to allow a relation
$\rel$ to progress towards different relations after passive and active
transitions.

\begin{defi}%[Diacritical progress]
  A relation $\rel$ diacritically progresses to $\rels$, $\relt$ written $\rel
  \pprogress \rels, \relt$, if $\rel \mathop\subseteq \rels$, $\rel
  \mathop\subseteq \relt$,  and $\sts \rel \stt$ implies that
  \begin{itemize}
  \item if $\sts \lts\act \sts'$ and $\lts\act$ is passive, then
    there exists $\stt'$ such that $\stt \ltswk\act \stt'$ and $\sts' \rels \stt'$;
  \item if $\sts \lts\act \sts'$ and $\lts\act$ is active, then
    there exists $\stt'$ such that $\stt \ltswk\act \stt'$ and $\sts' \relt \stt'$;
  \item the converse of the above conditions on $\stt$.
  \end{itemize}
  A $\star$-bisimulation is a relation $\rel$ such that $\rel \pprogress \rel,
  \rel$, and $\star$-bisimilarity $\bbisim$ is the union of all
  $\star$-bisimulations.
\end{defi}
With the same LTS, $\progress$ and $\pprogress$ would entail the same notions of
bisimulation and bisimilarity; the distinction between active and passive
transitions is interesting only when considering up-to techniques. We change
the notation for the bisimilarity $\bbisim$ to emphasize that we use a different LTS in
this section.

\subsection{Up-to Techniques, Soundness, and Completeness}
\label{ss:up-to-better}

We now discriminate up-to techniques, so that regular up-to techniques cannot be
used after passive transitions, while \emph{strong} ones can. An up-to technique
(resp. \emph{strong} up-to technique) is a function $f$ such that $\rel
\pprogress \rel, f(\rel)$ (resp. $\rel \pprogress f(\rel), f(\rel)$) implies
$\rel \mathop\subseteq \bbisim$. We also adapt the notions of evolution and
compatibility.

\begin{defi}
  A function $f$  evolves to $g, h$, written $f \fevolve g, h$, if
  for all $\rel \pprogress \rel, \relt$, we have $f(\rel) \pprogress g(\rel),
  h(\relt)$.

  A function $f$ \emph{strongly} evolves to $g, h$, written $f \sevolve g, h$,
  if for all $\rel \pprogress \rels, \relt$, we have $f(\rel) \pprogress
  g(\rels), h(\relt)$.
\end{defi}
Strong evolution is very general, as it uses any relation $\rel$, while regular
evolution is more restricted, as it relies on relations $\rel$ such that $\rel
\pprogress \rel, \relt$. But the definition of \emph{diacritical compatibility}
below still allows to use any combinations of strong up-to techniques after a
passive transition, even for functions which are not themselves strong. In
contrast, regular functions can only be used once after a passive transition of
an other regular function.

\begin{defi}
  A set $\setF$ of continuous functions is \emph{diacritically compatible} if there
  exists $\setS \subseteq \setF$ such that
  \begin{itemize}
  \item for all $f \in \setS$, we have $f \sevolve {\fid \setS}^\omega, {\fid \setF}^\omega$;
  \item for all $f \in \setF$, we have $f \fevolve {\fid \setS}^\omega \comp
    {\fid \setF} \comp {\fid \setS}^\omega, {\fid \setF}^\omega$.
  \end{itemize}
\end{defi}
If $(\setS_i)_{i \in I}$ is a family of subsets of $\setF$ which verify the
conditions of the definition, then $\bigcup_{i \in I} \setS_i$ also verifies
them. We can therefore consider the largest of such subsets, written
$\strong \setF$, which can be defined as the union of all subsets of $\setF$
verifying the conditions of the definition. This (possibly empty) subset of
$\setF$ contains the strong up-to techniques of~$\setF$.

\begin{lem}
  \label{l:properties-compatibility-better}
  Let $\setF$ be a diacritically compatible set.
  \begin{itemize}
  \item If $\rel \pprogress {\widehat{\mathsf{strong}(\setF)}}^\omega(\rel),
    {\fid \setF}^\omega(\rel)$,
    then ${\fid \setF}^\omega(\rel)$ is a $\star$-bisimulation.
  \item If $f \in \setF$,  then $f$ is an up-to technique. If $f \in \strong \setF$, then
    $f$ is a strong up-to technique.
  \item For all $f \in \setF$, we have $f(\bbisim) \mathop\subseteq \bbisim$.
  \end{itemize}
\end{lem}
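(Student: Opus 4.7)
The plan is to do the real work in item~(i) and derive items~(ii) and~(iii) as short consequences. Write $\setS = \strong\setF$, $\rels_0 = {\fid \setS}^\omega(\rel)$, and $\mathcal B = {\fid \setF}^\omega(\rel)$; the goal of~(i) is then $\mathcal B \pprogress \mathcal B, \mathcal B$. The central obstacle is that regular evolution, $f \fevolve {\fid \setS}^\omega \comp {\fid \setF} \comp {\fid \setS}^\omega, {\fid \setF}^\omega$, fires only on relations already progressing to themselves passively, whereas the hypothesis only yields $\rel \pprogress \rels_0, \mathcal B$ with $\rels_0 \neq \rel$ in general. The trick is to ``$\setS$-saturate'' before every application of a non-strong up-to, so that the passive target predicted by regular evolution collapses back onto the current relation.

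Concretely, I define a chain $V_0 = \rels_0$ and $V_{n+1} = {\fid \setS}^\omega({\fid \setF}(V_n))$, and prove by induction on $n$ that $V_n \pprogress V_n, \mathcal B$. Two absorption facts are used throughout, both following from Lemma~\ref{l:continuity}: ${\fid \setS}^\omega \comp {\fid \setS}^\omega \subseteq {\fid \setS}^\omega$ and ${\fid \setF}^\omega \comp {\fid \setF}^\omega \subseteq {\fid \setF}^\omega$; from them one immediately checks that each $V_n$ is a fixed point of ${\fid \setS}^\omega$ and that $\mathcal B$ is a fixed point of ${\fid \setF}^\omega$. An auxiliary lemma, proved by an inner induction on the ${\fid \setS}^n$-depth using strong evolution, states that if $\rel' \pprogress \rels', \relt'$ then ${\fid \setS}^\omega(\rel') \pprogress {\fid \setS}^\omega(\rels'), {\fid \setF}^\omega(\relt')$. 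Applied to the hypothesis, this gives the base case $\rels_0 \pprogress \rels_0, \mathcal B$. For the inductive step, the IH combined with regular evolution for each $f \in \setF$ gives $f(V_n) \pprogress V_{n+1}, \mathcal B$ (the passive target collapses because $V_n$ is ${\fid \setS}^\omega$-closed), so ${\fid \setF}(V_n) \pprogress V_{n+1}, \mathcal B$; a second invocation of the auxiliary lemma lifts this to $V_{n+1} \pprogress V_{n+1}, \mathcal B$. A routine induction shows ${\fid \setF}^n(\rel) \subseteq V_n$, so $\mathcal B = \bigcup_n V_n$, and $\mathcal B \pprogress \mathcal B, \mathcal B$ follows pointwise. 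I expect the bookkeeping of the two nested inductions together with the absorption identities to be the main technical hurdle.

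For item~(ii), if $\rel \pprogress \rel, f(\rel)$ with $f \in \setF$, then $\rel \subseteq \rels_0$ and $f(\rel) \subseteq \mathcal B$, so weakening the targets yields $\rel \pprogress \rels_0, \mathcal B$; item~(i) makes $\mathcal B$ a $\star$-bisimulation containing $\rel$, whence $\rel \subseteq \bbisim$. The strong case, starting from $\rel \pprogress f(\rel), f(\rel)$, is analogous. For item~(iii), $\bbisim$ is itself a $\star$-bisimulation (unions of $\star$-bisimulations are $\star$-bisimulations, since $\pprogress$ commutes with unions), so $\bbisim \pprogress \bbisim, \bbisim$ and a fortiori $\bbisim \pprogress {\fid \setS}^\omega(\bbisim), {\fid \setF}^\omega(\bbisim)$; item~(i) then makes ${\fid \setF}^\omega(\bbisim)$ a $\star$-bisimulation, so ${\fid \setF}^\omega(\bbisim) \subseteq \bbisim$, and $f(\bbisim) \subseteq {\fid \setF}^\omega(\bbisim) \subseteq \bbisim$ for every $f \in \setF$.
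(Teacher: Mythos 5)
Your proof is correct and takes essentially the same route as the paper's: the same key ingredients appear in the same roles, namely the auxiliary fact that strong evolution lets one $\setS^\omega$-saturate a progressing relation, the absorption identities that collapse the passive target onto the relation itself so that regular evolution becomes applicable, and an induction on the number of ${\fid \setF}$-layers, your chain $V_n$ being exactly the paper's $(\setS^\omega \comp {\fid \setF} \comp \setS^\omega)^n(\setS^\omega(\rel))$ with the induction hypothesis phrased for the saturated relation rather than the unsaturated one. Items (ii) and (iii) are then derived exactly as in the paper.
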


\begin{proof}
  Let $\setS \defeq {\widehat{\mathsf{strong}(\setF)}}$. For the first item, we
  prove that for all $n$
  \[(\setS^\omega \comp {\fid \setF} \comp \setS^\omega)^n(\rel) \pprogress
  (\setS^\omega \comp {\fid \setF} \comp \setS^\omega)^n(\setS^\omega(\rel)),
  {\fid \setF}^\omega(\rel)\]
  by induction on~$n$. There is nothing to prove for $n=0$. Suppose
  $n>0$. We know that
  \[(\setS^\omega \comp {\fid \setF} \comp \setS^\omega)^{n-1}(\rel) \pprogress
  (\setS^\omega \comp {\fid \setF} \comp \setS^\omega)^{n-1}(\setS^\omega(\rel)),
  {\fid \setF}^\omega(\rel).\] For all $f \in \setS$, we have
  \[f((\setS^\omega \comp {\fid \setF} \comp \setS^\omega)^{n-1}(\rel)) \pprogress
  \setS^\omega(\setS^\omega \comp {\fid \setF} \comp
  \setS^\omega)^{n-1}(\setS^\omega(\rel)), {\fid \setF}^\omega({\fid \setF}^\omega(\rel)),\]
  therefore we have
  \[\setS^\omega(((\setS^\omega \comp {\fid \setF} \comp \setS^\omega)^{n-1}(\rel)))
  \pprogress \setS^\omega(\setS^\omega \comp {\fid \setF} \comp
  \setS^\omega)^{n-1}(\setS^\omega(\rel)), {\fid \setF}^\omega(\rel).\]
  Because
  $\setS^\omega \comp (\setS^\omega \comp {\fid \setF} \comp \setS^\omega)^{n-1} =
  \setS^\omega \comp (\setS^\omega \comp {\fid \setF} \comp \setS^\omega)^{n-1} \comp
  \setS^\omega$, for all $f \in {\fid \setF}$, we have
  \[f(\setS^\omega(((\setS^\omega \comp {\fid \setF} \comp \setS^\omega)^{n-1}(\rel)))
  \pprogress (\setS^\omega \comp {\fid \setF} \comp
  \setS^\omega)(\setS^\omega(\setS^\omega \comp {\fid \setF} \comp
  \setS^\omega)^{n-1}(\setS^\omega(\rel))), {\fid \setF}^\omega({\fid \setF}^\omega(\rel)),\]
  which implies
  ${\fid \setF}(\setS^\omega(((\setS^\omega \comp {\fid \setF} \comp
  \setS^\omega)^{n-1}(\rel))) \pprogress (\setS^\omega(\setS^\omega \comp {\fid \setF}
  \comp \setS^\omega)^n(\setS^\omega(\rel))), {\fid \setF}^\omega(\rel)$.
  Finally, composing again with $\setS^\omega$, we obtain
  \[\setS^\omega(((\setS^\omega \comp {\fid \setF} \comp \setS^\omega)^n(\rel)) \pprogress \setS^\omega
  \comp (\setS^\omega \comp {\fid \setF} \comp \setS^\omega)^n(\setS^\omega(\rel)), {\fid \setF}^\omega(\rel),\]
  as wished. 

  Because ${\fid \setF}^\omega = (\setS^\omega \comp {\fid \setF} \comp \setS^\omega)^\omega$,
  we get that
  ${\fid \setF}^\omega(\rel) \pprogress {\fid \setF}^\omega(\rel), {\fid \setF}^\omega(\rel)$, i.e.,
  ${\fid \setF}^\omega(\rel)$ is a $\star$-bisimulation.

  For the second item, let $f \in \setF$ and $\rel \pprogress \rel, f(\rel)$.
  Then $\mathord{\rel} \subseteq {\fid \setF}^\omega(\rel)$ by definition of $\omega$ and
  ${\fid \setF}^\omega(\rel) \mathop\subseteq \bbisim$ by the first item. Therefore we
  have $\rel \mathop\subseteq \bbisim$ and $f$ is an up-to technique. Similarly,
  we can show that $f \in \strong{\setF}$ and $\rel \pprogress f(\rel), f(\rel)$
  imply $\rel \mathop\subseteq \bbisim$, meaning that $f$ is a strong up-to technique. 

  For the last item, for all $f \in \setF$, we have
  $f(\bbisim) \subseteq {\fid \setF}^\omega(\bbisim)$, and ${\fid \setF}^\omega(\bbisim)
  \mathop\subseteq \bbisim$ by the first item, so we have $f(\bbisim) \mathop\subseteq
  \bbisim$ as wished.
\end{proof}

We now use this framework to define up-to techniques for the
$\star$-bisimulation. The definitions of $\rawperm$ and $\rawweak$ are
unchanged. We define bisimulation up to related contexts for values
$\rawutrctxv$ and for any term $\rawutrctx$ as follows:
\vspace{1mm}
\begin{mathpar}
\inferrule{\env \rel \envd}
{(\env, \inctx{\mhcvs}{\env}, \inctx{\mhc}{\env}) \utrctxv\rel 
 (\envd,\inctx{\mhcvs}{\envd},\inctx{\mhc}{\envd})}
\and
\hspace{-1em}\inferrule{(\env, e_1) \rel (\envd, e_2)}
{(\env, \inctx{\mhcvs}{\env}, \inctx{\emhc}{e_1, \env}) \utrctx\rel
 (\envd,\inctx{\mhcvs}{\envd},\inctx{\emhc}{e_2, \envd})}
\end{mathpar}
\vspace{1mm}

\noindent
The definitions look similar to the ones of $\rawutctxv$ and
$\rawutctx$, but the grammar of multi-hole contexts now include
$\holecont_i$. Besides, we inline strengthening in the definitions of
$\rawutrctxv$ and $\rawutrctx$, allowing $\env$, $\envd$ to be
extended. This is necessary because, e.g., $\rawstr$ and $\rawutrctx$
cannot be composed after a passive transition (they are both not
strong), so $\rawutrctx$ have to include $\rawstr$ directly. Note that
the behavior of $\rawstr$ can be recovered from $\rawutrctx$ by taking
$\emhc = \hole$.

\begin{lem} $\setF \defeq \{ \rawperm, \rawweak, \rawutrctxv, \rawutrctx\}$ is
  diacritically compatible, with $\strong \setF= \{\rawperm, \rawweak\}$. % More precisely, we have:
\end{lem}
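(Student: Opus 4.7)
The plan is to verify, for each $f \in \setF$, the evolution clause dictated by its intended status: strong for $\rawperm$ and $\rawweak$, regular for $\rawutrctxv$ and $\rawutrctx$. In each case I would fix a relation $\rel$ diacritically progressing to the appropriate targets, take an arbitrary pair in $f(\rel)$ and an arbitrary transition $\lts\act$ from its left state, and exhibit a matching transition on the right whose resulting pair falls into the target. All four cases then proceed by case analysis on $\lts\act$.

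For $\rawperm$, Lemma~\ref{l:red-perm-bis} tells us that any transition from $\sts\sigma_1$ is the $\sigma_1$-transport of a transition from $\sts$; progress of $\rel$ supplies a matching move on $\stt$, which I transport back through $\sigma_2$ to land in $\perm\rels$ (passive) or $\perm\relt$ (active), hence in ${\fid\setS}^\omega(\rels)$ or ${\fid\setF}^\omega(\relt)$. For $\rawweak$, every transition from $(\env, e_1)$ re-indexes, by shifting the hole indices occurring in the label, to a transition from $(\seq v, \env, e_1)$; the match obtained from $\rel$'s progress is then brought back under $\rawweak$. Neither argument invokes any further up-to technique, so strong evolution holds.

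For $\rawutrctxv$ and $\rawutrctx$, I would fix $\rel \pprogress \rel, \relt$, take a typical pair $(\env, \inctx\mhcvs\env, \inctx\emhc{e_1,\env}) \utrctx\rel (\envd, \inctx\mhcvs\envd, \inctx\emhc{e_2,\envd})$ coming from $(\env, e_1) \rel (\envd, e_2)$, and dispatch on whether the interesting position of the label lies inside the plugged $e_1$, inside $\env$ (possibly reached through an $\appcont{\holecont_i}{}$ construct), or purely inside the decor $\mhcvs, \emhc$. The first case is matched via a transition of $(\env, e_1)$ obtained from progress of $\rel$ and reassembled under the same decor by $\rawutrctx$, landing in $\utrctx\rel$ when passive and $\utrctx\relt$ when active. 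The second case reduces to a transition of $\env \rel \envd$, possibly preceded by $\rawweak$ to drop now-useless values or by $\rawperm$ to align freshly-generated prompts, and closed under $\rawutrctxv$ or $\rawutrctx$. The third, purely decor-internal case requires no progress and stays within $\utrctxv\rel$ or $\utrctx\rel$. Because multi-hole contexts carry no bare prompt, the $\ltsprpt$, $\ltsnuprpt$, and $\ltsdecomp$ labels fall automatically into the second category.

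The main obstacle is the $\ltsstuck\emhc'$ case when the stuck redex $\prWithSC p x e$ is contributed by the decor rather than by $e_1$: this requires first invoking $\ltsdecomp$ on some $\env_i$ to expose the matching $\prReset p$ and only then reassembling with $\rawutrctx$; the prompt-accounting is delicate but mechanical once $p$ is traced back to $\env$. The same case analysis also pinpoints why $\strong\setF$ cannot include $\rawutrctxv$ or $\rawutrctx$: for the passive $\ltsctx i {\cval\mhc'}$ label, when the tested continuation comes from $\env$, the match supplied by $\rel$ is the $\relt$-match, so the resulting pair lies in ${\fid\setF}^\omega(\relt)$, never in any ${\fid\setS}^\omega(\rels)$. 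The unsoundness scenario from Section~\ref{ss:motivation} closes the deal: the singleton $\rel = \{((\reifCtx{E_1}), (\reifCtx{E_2}))\}$ for unrelated $E_1, E_2$ would progress to $\utrctxv\rel$ via the $\appcont$-built context $\appcont{\holecont_1}{\cval\mhc'}$, and strongness of $\rawutrctxv$ would then force $\reifCtx{E_1} \bbisim \reifCtx{E_2}$ in general via Lemma~\ref{l:properties-compatibility-better}, a contradiction. Therefore $\strong\setF = \{\rawperm, \rawweak\}$.
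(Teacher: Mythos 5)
The paper does not actually prove this lemma in the text you were given: the verification is delegated to the companion research report, so there is no line-by-line proof to compare against. What can be said is that your plan is the one the framework dictates and that the paper's surrounding discussion confirms it: per-function evolution checks, with $\rawperm$ handled by transporting transitions along Lemma~\ref{l:red-perm-bis}, $\rawweak$ by re-indexing labels into the larger state and restricting back, and the two context closures by locating the action of each label (in the plugged term, in the environment, or in the context skeleton) and reassembling under ${\fid \setS}^\omega \comp {\fid \setF} \comp {\fid \setS}^\omega$ after passive steps and ${\fid \setF}^\omega$ after active ones. Your argument that $\rawutrctxv, \rawutrctx \notin \strong\setF$ is essentially the paper's own motivating observation from Section~\ref{ss:motivation} (the context $\appcont{\holecont_1}{\cval\mhc}$ inverting the passive $\ltsctx{1}{\cval\mhc}$ transition); note only that to turn it into a formal refutation the candidate relation must be closed under $(\promptCheckRule)$ so that $\ltsnuprpt$ is matched, otherwise it does not diacritically progress at all.

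The reservation is that the proposal remains a skeleton precisely where the lemma has content. Two deferred points are more than ``mechanical'': first, when $e_1$ in a $\utrctx\rel$ pair is a value consumed by the surrounding $\emhc$, you must first fire the passive $\ltsval$ transition of the underlying pair, which by the definition of regular evolution only returns you to $\rel$ itself --- this is exactly why the passive target has the sandwich shape and why $\rawutrctx$ had to absorb strengthening; your three-way case split does not name this case. Second, the $\ltsstuck{\emhc'}$ analysis when the capture crosses a $\holecont_i$ boundary requires iterating $\ltsdecomp{i}{j}$ to peel the related continuations apart and tracking that the surrounding prompts of the two captured prefixes correspond position-wise via $\ltsprpt{i}{j}$; you correctly flag this as the obstacle but do not discharge it. Nothing in the outline appears to fail, but as written it establishes the shape of the proof rather than the proof.
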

As a result, these functions are up-to techniques, and $\rawweak$ and
$\rawperm$ can be used after a passive transition. Because of the last
item of Lemma~\ref{l:properties-compatibility-better}, $\bbisim$ is
also a congruence w.r.t. evaluation contexts, which means that
$\bbisim$ is sound w.r.t. $\ectxEq$. We can also prove it is complete
the same way as for Theorem~\ref{t:charac-std}, leading again to full
characterization.
\begin{thm}
  $e_1 \ectxEq e_2$ iff $(\emptyset, e_1) \bbisim (\emptyset, e_2)$.
\end{thm}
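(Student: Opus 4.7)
I would prove the two implications separately, leveraging the compatibility results already established in this section. For \emph{soundness}, the last item of Lemma~\ref{l:properties-compatibility-better} applied to $\rawutrctx \in \setF$ yields $\rawutrctx(\bbisim) \mathop\subseteq \bbisim$. In particular, if $(\emptyset, e_1) \bbisim (\emptyset, e_2)$ and $E$ is any promptless evaluation context (used as a zero-hole $\emhc$), we obtain $(\emptyset, \inctx{E}{e_1}) \bbisim (\emptyset, \inctx{E}{e_2})$. It then suffices to show that $\bbisim$ on states with empty environments implies $\eqObs$: internal reductions are matched via $\ltswktau$, giving condition~(1); the shape of $\ltsstuck{\emhc}$, which in an empty environment can never supply a matching delimiter to unstuck a term, forces control-stuck terms on one side to correspond to control-stuck terms on the other, yielding~(2); and~(3) follows because divergence and errors both manifest as the absence of a terminating run to a value or a stuck term.

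For \emph{completeness}, I would follow the scheme sketched for Theorem~\ref{t:charac-std} and define
\[
  \rel = \{((\env, e_1), (\envd, e_2)) \mid |\env| = |\envd| \text{ and } \forall \emhc,\ \inctx{\emhc}{e_1, \env} \eqObs \inctx{\emhc}{e_2, \envd}\}
\]
where $\emhc$ now ranges over the enriched multi-hole contexts of Section~\ref{s:better} (including the $\appcont{\holecont_i}{\emhc}$ construct), and show that $\rel$ is a $\star$-bisimulation up to $\rawperm$. For each active transition of one side, one builds a test context $\emhc$ forcing the other side to match. The cases for $\lts\tau$, $\ltslam{i}{\cval\mhc}$, $\ltsstuck{\emhc}$, $\ltsval$, $\ltsprpt{i}{j}$, and $\ltsnuprpt$ are direct adaptations of their counterparts in the proof of Theorem~\ref{t:charac-std}, with $\rawperm$ absorbing the renaming of freshly generated prompts in the $\ltsnuprpt$ case. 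The passive transition $\ltsctx{i}{\cval\mhc}$ is trivial: the plugged term is literally a specific instance of the $\emhc$ universally quantified in the definition of $\rel$, so its target states are immediately in $\rel$.

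The main obstacle is the new $\ltsdecomp{i}{j}$ transition. When $\env_i = \reifCtx{\inctx{E_1}{\prReset{p}{E_2}}}$ with $\env_j = p$ and $p \notin \surPr{E_2}$, I must show from $\eqObs$-equivalence alone that $\envd_j$ is a prompt $p'$, that $\envd_i$ has a matching shape $\reifCtx{\inctx{E'_1}{\prReset{p'}{E'_2}}}$ with $p' \notin \surPr{E'_2}$, and that the pairs $(\reifCtx{E_1}, \reifCtx{E'_1})$ and $(\reifCtx{E_2}, \reifCtx{E'_2})$ remain in $\rel$ for every future context. The strategy is to exhibit a distinguishing context $\emhc$ that combines $\holecont_i$ with an inner $\prWithSC{\hole_j}{x}{\cdot}$ so as to isolate exactly the outermost $\rawprReset{p}$-delimited sub-context of $\env_i$, and then to show that any future test on the decomposed pieces can be reflected back into a test on the original $\env_i$, $\envd_i$ through an isolating context of this form. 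This is precisely where the proof departs from Theorem~\ref{t:charac-std}, since in the standard bisimilarity the transition $\ltsctx{i}{\mhc}$ with arbitrary $\mhc$ made an explicit decomposition step unnecessary; here the restriction to value contexts in $\ltsctx{i}{\cval\mhc}$ requires us to justify $\ltsdecomp{i}{j}$ directly, and getting the witnessing contexts right is the main technical hurdle.
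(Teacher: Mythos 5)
Your proposal follows the paper's proof exactly: soundness via the congruence property obtained from the diacritical compatibility of $\rawutrctx$ (last item of Lemma~\ref{l:properties-compatibility-better}), and completeness by showing that the relation built from $\eqObs$ under all multi-hole evaluation contexts is a $\star$-bisimulation up to permutation, just as for Theorem~\ref{t:charac-std}. Your identification of $\ltsdecomp i j$ as the only genuinely new case is consistent with the paper, which states that completeness is proved ``the same way'' and defers the details to the accompanying research report.
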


\begin{rem}
  \label{r:stuck-error-better}
  If we consider control-stuck terms as errors, as in
  Remark~\ref{r:stuck-error}, then we can use the transition of
  Remark~\ref{r:stuck-error-standard}, considered as active, and the results of
  this section scale to such a version of the bisimilarity. While the
  compatibility proof for $\rawutrctx$ does not change much, the one for
  $\rawutrctxv$ needs an extra case analysis to deal with the modified
  $\ltsstuck \emhc$ transition; see \iftoggle{paper}{
    \cite[Remark
    B.3]{Aristizabal-al:Inria16}}{Remark~\ref{rem:modified-stuck-proof}}
  for further details.
\end{rem}

\subsection{Examples}

We illustrate the use of $\bbisim$, $\rawutrctxv$, and $\rawutrctx$ with two
examples that would be much harder to prove with the techniques of
Section~\ref{s:standard}.

\begin{ex}[$\beta_\Omega$ axiom]
  \label{ex:beta-omega}
  We prove $\app{(\lam x {\inctx E x})}{e} \bbisim \inctx E e$ if $x \notin
  \freeVars E$ and $\surPr{E} = \emptyset$. Define $\rel$ starting with 
  $(\reifCtx{\app{(\lam x {\inctx E x})} \hole}) \rel (\reifCtx E)$,
  and closing it under the $(\promptCheckRule)$ and the following rule:
%%  \vspace{1mm}
  \begin{mathpar}
    \inferrule{\env \rel \envd}
    {(\env, \app{(\lam x {\inctx E x})}{\inctx {\cval \mhc} \env})
    \rel
    (\envd, \inctx E {\inctx {\cval \mhc} \envd})}
  \end{mathpar}
  \vspace{1mm}

  \noindent
  Then $(\emptyset, \app{(\lam x {\inctx E x})}{e}) 
  \weak{\utrctxv\rel} (\emptyset, \inctx E e)$
  and $\rel$ is a bisimulation up to context, since
  the sequence $\env \ltsctx 1 {\cval \mhc} (\env, \app{(\lam x {\inctx E
      x})}{\inctx {\cval \mhc} \env}) \lts\tau (\env, \inctx E {\inctx {\cval
      \mhc} \env})$ fits
    $\envd \ltsctx 1 {\cval \mhc} (\envd, \inctx E {\inctx
    {\cval \mhc} \envd}) \ltswk\tau (\envd, \inctx E {\inctx {\cval \mhc}
    \envd})$,
  where the final states are in $\rawutrctxv$. Notice we use $\rawutrctxv$
  after $\lts\tau$, and not after the passive $\ltsctx 1 {\cval \mhc}$
  transition.
\end{ex}

\begin{ex}[Exceptions]
A possible way of extending a calculus with exception handling is to
add a construct $\mathsf{try}_\var{r}\;e\;\mathsf{with}\;v$, which
evaluates $e$ with a function raising an exception stored under the
variable~$\var{r}$. When~$e$ calls the function in~$r$ with some
argument~$v'$, even inside another $\mathsf{try}$ block, then the
computation of $e$ is aborted and replaced by $\app{v}{v'}$.  We can
implement this behavior directly in $\lambdabla$; more precisely, we
write $\mathsf{try}_\var{r}\;e\;\mathsf{with}\;v$ as
$\app{\app{\mathsf{handle}}{(\lam{r}{e})}}{v}$, where
$\mathsf{handle}$ is a function expressed in the calculus. One
possible implementation of $\mathsf{handle}$ in $\lambdabla$ is very
natural and heavily relies on fresh-prompt generation:
\begin{eqnarray*}
\mathsf{handle} &\defeq&
    \lam{f}{\lam{h}{\prFresh{x}{\prReset{\var{x}}{\app{\var{f}}{(\lam{z}{
        \prWithSC{\var{x}}{\osef}{\app{\var{h}}{\var{z}}}})}}}}}
\end{eqnarray*}
The idea is to raise an exception by aborting the current continuation
up to the corresponding prompt. The same function can be implemented
using any comparable-resource generation and only one prompt $p$:
\begin{eqnarray*}
\mathsf{handle}_p &\defeq&
    \lam{f}{\lam{h}{\prFresh{x}{
        \app{\app{(\prReset{\var{p}}{
            \expLet{r}{\app{\var{f}}{\mathsf{raise}_{p,\var{x}}}}{
            \lam{\osef}{\lam{\osef}{\var{r}}}}})}{
        \var{x}}}{\var{h}}}}}\\
\mathsf{raise}_{p,\var{x}} &\defeq&
    \expFix{r}{z}{\prWithSC{\var{p}}{\osef}{\lam{y}{\lam{h}{
        \expIf{\prEqual{\var{x}}{\var{y}}}{
            \app{\var{h}}{\var{z}}}{
            \app{\var{r}}{\var{z}}}}}}}
\end{eqnarray*}
Here the idea is to keep a freshly generated name $x$ and a handler
function $h$ with the prompt corresponding to each call of
$\mathsf{handle}_p$. The exception-raising function
$\mathsf{raise}_{p,\var{x}}$ iteratively aborts the current delimited
continuation up to the nearest call of $\mathsf{handle}_p$ and checks
the name stored there in order to find the corresponding handler. Note
that this implementation also uses prompt generation, since it is the
only comparable resource that can be dynamically generated in
$\lambdabla$, but the implementation can be easily translated to,
e.g., a calculus with single-prompted delimited-control operators and
first-order store.
\end{ex}
\begin{proof}
  We prove that both versions of $\mathsf{handle}$ are $\star$-bisimilar.  As in
  Example~\ref{ex:folklore} we iteratively build a relation $\rel$ closed
  under the $(\promptCheckRule)$ rule, so that
  $\rel$ is a bisimulation up to context. We start with $(\mathsf{handle})
  \rel (\mathsf{handle}_p)$; to match the $\ltslam{1}{\cval\mhc}$ transition, we
  extend $\rel$ as follows:
\vspace{1mm}
\begin{mathpar}
\inferrule{\env \rel \envd}
{(\env, \lam{h}{\prFresh{x}{
    \prReset{\var{x}}{\app{\inctxe{\cval\mhc}\env}{(\lam{z}{
        \prWithSC{\var{x}}{\osef}{\app{\var{h}}{\var{z}}}})}}}})
\rel
(\envd, \lam{h}{\prFresh{x}{
    \app{\app{(\prReset{\var{p}}{
        \expLet{r}{\app{\inctxe{\cval\mhc}\envd}{\mathsf{raise}_{p,\var{x}}}}{
        \lam{\osef}{\lam{\osef}{\var{r}}}}})}{
    \var{x}}}{\var{h}}}})
}
\end{mathpar}
\vspace{1mm}

\noindent
We obtain two functions which are in turn tested with
$\ltslam{n+1}{\cval{\mhc'}}$, and we obtain the states
\[
(\env, \prReset{p_1}{\app{\inctxe{\cval\mhc}\env}{(\lam{z}{
    \prWithSC{p_1}{\osef}{\app{\inctxe{\cval{\mhc'}}\env}{\var{z}}}})}}) \mbox{ and
}
(\envd, \app{\app{(\prReset{p}{
        \expLet{r}{\app{\inctxe{\cval\mhc}\envd}{\mathsf{raise}_{p,p_2}}}{
        \lam{\osef}{\lam{\osef}{\var{r}}}}})}{
    p_2}}{\inctxe{\cval{\mhc'}}\envd}).
\]
Instead of adding them to $\rel$ directly, we decompose them into
corresponding parts using up to context (with $\mhc =
\appcont{\holecont_{n+1}}{\app{\cval\mhc}{\hole_{n+2}}}$), and we add
these subterms to $\rel$:
\vspace{1mm}
\[
\inferrule{\env \rel \envd
    \and p_1 \notin \promptsOf{\env}
    \and p_2 \notin \promptsOf{\envd}}
{(\env, 
\reifCtx{\prReset{p_1}{\mtectx}},
\lam{z}{\prWithSC{p_1}{\osef}{\app{\inctxe{\cval{\mhc'}}{\env}}{\var{z}}}})
\rel
(\envd, 
\reifCtx{\app{\app{
    (\prReset{p}{\expLet{r}{\mtectx}{\lam{\osef}{\lam{\osef}{\var{r}}}}})
    }{p_2}}{\inctxe{\cval{\mhc'}}\envd}},
\mathsf{raise}_{p,p_2})
} \tag{$**$}
\]
\vspace{1mm}

\noindent
Testing the two captured contexts with $\ltsctx{n+1}{\cval{\mhc''}}$
is easy, because they both evaluate to the thrown value.  We now
consider
$\lam{z}{\prWithSC{p_1}{\osef}{\app{\inctxe{\cval{\mhc'}}{\env}}{\var{z}}}}$
and $\mathsf{raise}_{p,p_2}$; after the transition
$\ltslam{n+2}{\cval\mhc}$ we get the two control stuck terms
\begin{mathpar}
\prWithSC{p_1}{\osef}{\app{\inctxe{\cval{\mhc'}}{\env}}{\inctxe{\cval\mhc}}{\env}}
\quad\textrm{and}\quad
\prWithSC{p}{\osef}{\lam{y}{\lam{h}{
    \expIf{\prEqual{p_2}{\var{y}}}{
        \app{\var{h}}{\inctxe{\cval\mhc}{\envd}}}{
        \app{\mathsf{raise}_{p,p_2}}{\inctxe{\cval\mhc}{\envd}}}}}}
\end{mathpar}
Adding such terms to the relation will not be enough.
The first one can be unstuck only using the corresponding context 
$\reifCtx{\prReset{p_1}{\mtectx}}$, but the second one can be unstuck using
any context added by rule $(*)$, even for a different $p_2$.
In such a case, it will consume a part of the context and evaluate to itself.
To be more general we add the following rule:
\vspace{1mm}
\begin{mathpar}
\inferrule{\env \rel \envd
  \and 
  \inctxe{\emhc}{
    \prWithSC{p_1}{\osef}{\app{\inctxe{\cval{\mhc'}}{\env}}{\inctxe{\cval\mhc}}{\env}}
    ,\env} \textrm{ is control-stuck}
}
{(\env,
  \inctxe{\emhc}{
    \prWithSC{p_1}{\osef}{\app{\inctxe{\cval{\mhc'}}{\env}}{\inctxe{\cval\mhc}}{\env}}
    ,\env})
\rel
(\envd,
  \prWithSC{p}{\osef}{\lam{y}{\lam{h}{
    \expIf{\prEqual{p_2}{\var{y}}}{
        \app{\var{h}}{\inctxe{\cval\mhc}{\envd}}}{
        \app{\mathsf{raise}_{p,p_2}}{\inctxe{\cval\mhc}{\envd}}}}}})}
\end{mathpar}
\vspace{1mm}

\noindent
The newly introduced stuck terms are tested with $\ltsstuck{\emhc'}$;
if $\emhc'$ does not have $\holecont_i$ surrounding~$\hole$, they are
still stuck, and we can use up to evaluation context to conclude.
Assume $\emhc' = \inctxe{\emhc_1}{\inctxe{\holecont_i}{\emhc_2}}$
where $\emhc_2$ has not $\holecont_j$ around $\hole$.
If $i$ points to the evaluation context added by~$(**)$ for the same $p_2$,
then they both evaluate to terms of the same shape, so we use up to context
with $\mhc = \inctxe{\emhc_1}{\app{\cval\mhc'}{\cval\mhc}}$.
Otherwise, we know the second program compares two different prompts,
so it evaluates to
$\inctxe{\emhc_1}{\prWithSC{p}{\osef}{\lam{y}{\lam{h}{
        \expIf{\prEqual{p_2}{\var{y}}}{
          \app{\var{h}}{\inctxe{\cval\mhc}{\envd}}}{
          \app{\mathsf{raise}_{p,p_2}}{\inctxe{\cval\mhc}{\envd}}}}}},\envd}$
and we use $\rawutrctx$ with the last rule.
\end{proof}

%==============================================================================
\section{Shift and Reset}
\label{s:shift-reset}

In this section, we show how $\star$-bisimilarity can be defined for
$\lamshift$, a $\lambda$-calculus extended with \textshift and
\textreset. These operators can be encoded in $\lambdabla$ (see
Example~\ref{ex:folklore}), but relying on this encoding would lead to
a sound, but not complete bisimilarity for \textshift and
\textreset. Indeed, there are terms equivalent in $\lamshift$, the
encodings of which are no longer equivalent with the more expressive
constructs of $\lambdabla$: see
Example~\ref{ex:lambdaS-lambdabla}. This is why we work
with~$\lamshift$ is this section, and not $\lambdabla$.

We study several bisimilarities for $\lamshift$ in previous
works~\cite{Biernacki-Lenglet:FOSSACS12,Biernacki-Lenglet:FLOPS12,Biernacki-Lenglet:APLAS13,Biernacki-al:HAL15}. In
particular, we define environmental ones
in~\cite{Biernacki-Lenglet:APLAS13,Biernacki-al:HAL15}, but without a
relation equivalent to bisimulation up to related contexts, which
makes the proof of the $\beta_\Omega$ axiom very difficult in these
papers. The proof in Example~\ref{ex:beta-omega} is as easy as the
proof of the $\beta_\Omega$ axiom in~\cite{Biernacki-Lenglet:FLOPS12},
but the bisimilarity of~\cite{Biernacki-Lenglet:FLOPS12} is not
complete. Therefore, a sound and complete $\star$-bisimilarity for
$\lamshift$ which allows for simple equivalence proofs thanks to up-to
techniques improves over our previous work.

\subsection{Syntax, Semantics, and Contextual Equivalence}

The calculus $\lamshift$ is a single-prompted version of $\lambdabla$,
where the now unique delimiter $\rawreset$ is called \textreset and
the capturing construct~$\rawshift$ is called \textshift. The syntax
of the different entities is as follows.

\vspace{2mm}
\noindent
\begin{tabularx}{\textwidth}{@{\hspace{\parindent}}rlXr@{}}
  $e$ & $\bnfdef$ & $v \bnfor \app{e}{e} \bnfor \reset{e} \bnfor \shift x e$ & (expressions) \\[1mm]
  $v$ & $\bnfdef$ & $\var{x} \bnfor \lam{x}{e}$
  & (values) \\[1mm]
  $E$ & $\bnfdef$ & $\mtectx \bnfor \argectx{E}{e} \bnfor \valectx{v}{E}$
  & (pure contexts)\\[1mm]
  $F$ & $\bnfdef$ & $\mtectx \bnfor \argectx{F}{e} \bnfor \valectx{v}{F}
  \bnfor \reset{F}$
  & (evaluation contexts)
\end{tabularx}
\vspace{2mm} 

\noindent We distinguish two kinds of evaluation contexts: pure
contexts, ranged over by $E$, can be captured by \textshift, while
those represented by $F$ are the regular evaluation contexts. Captured
contexts are no longer part of the syntax, but are instead turned into
$\lambda$-abstractions, as we can see in the following reduction
rules.

\vspace{1mm}
\begin{minipage}{0.55\textwidth}
\[
\begin{array}{rcl}
\app{(\lam{x}{e})}{v} & \rawred & \subst{e}{x}{v} \\[1mm]
\reset{v} & \rawred & v \\[1mm]
\reset{\inctx{E}{\shift x {e}}} & \rawred &
  \reset{\subst{e}{x}{\lam y {\reset {\inctx E y}}}} \quad y \mbox{ fresh}
\end{array}
\]
\end{minipage}
\begin{minipage}{0.45\textwidth}

\begin{mathpar}
  \inferrule[Compatibility]{\red{e_1}{e_2}}
            {\red{\inctx{F}{e_1}}{\inctx{F}{e_2}}}
\end{mathpar}
\end{minipage}
\vspace{1mm}

\noindent The operator $\rawshift$ captures a surrounding context $E$ up to the
first enclosing \textreset. This \textreset is left in place, but $E$ remains
delimited when captured in $\lam y {\reset {\inctx E y}}$.

The original semantics of \textshift and
\textreset~\cite{Biernacka-al:LMCS05} applies these rules only to
terms with an outermost \textreset; this requirement is often lifted
in practical implementation~\cite{Dybvig-al:JFP06,Filinski:POPL94} or
studies of these
operators~\cite{Asai-Kameyama:APLAS07,Kameyama:HOSC07}. As
in~\cite{Biernacki-Lenglet:APLAS13,Biernacki-al:HAL15}, we define
equivalences for the original and the relaxed semantics. The two
semantics differ mainly in the normal forms they produce: an
expression~$\reset e$ cannot reduce to a control-stuck term $\inctx E
{\shift x {e'}}$ in the original semantics, while such a normal form
can still be obtained with the relaxed semantics. As a result, we
distinguish the observable actions for the original semantics
$\eqObsO$ from those for the relaxed semantics $\eqObsR$. Unlike
in~$\lambdabla$, both semantics cannot produce errors, so we simply
write $e \diverge$ when $e$ diverges.

\begin{defi}
We write $e_1 \eqObsO e_2$ if
\begin{enumerate}
\item $\redrtc{e_1}{v_1}$ iff $\redrtc{e_2}{v_2}$,
\item $e_1 \diverge$ iff $e_2 \diverge$.
\end{enumerate}
We write $e_1 \eqObsR e_2$ if
\begin{enumerate}
\item $\redrtc{e_1}{v_1}$ iff $\redrtc{e_2}{v_2}$,
\item $\redrtc{e_1}{\inctx {E_1}{\shift x {e'_1}}}$ iff
  $\redrtc{e_2}{\inctx {E_2}{\shift x {e'_2}}}$,
\item $e_1 \diverge$ iff $e_2 \diverge$.
\end{enumerate}
\end{defi}
\noindent Similarly, we define a contextual equivalence for each semantics. 

\begin{defi}[Contextual equivalence]
  Given two closed expressions $e_1$ and $e_2$, we write $e_1 \ctxEqO e_2$ if
  for all $E$, we have $\reset{\inctx{E}{e_1}} \eqObsO \reset{\inctx{E}{e_2}}$,
  and we write $e_1 \ctxEqR e_2$ if for all $E$, we have
  $\inctx{E}{e_1} \eqObsR \inctx{E}{e_2}$.
\end{defi}

\noindent Because we no longer have resource generation, note that
testing with evaluation contexts~$F$ is equivalent to testing with any
context $C$ in $\lamshift$~\cite{Biernacki-al:HAL15}.

\begin{ex}
  \label{ex:lambdaS-lambdabla} The expressions
  $\reset{\reset {e_1} \iapp (\reset {e_2} \iapp {\shift x {\lam y
        y}})}$ and $\reset{\reset {e_2} \iapp (\reset {e_1} \iapp
    {\shift x {\lam y y}})}$ are contextually equivalent in
  $\lamshift$ with either semantics, but their encodings are not
  bisimilar in $\lambdabla$. In $\lamshift$, depending on whether
  $\reset{e_1}$ or $\reset{e_2}$ diverge or reduce to a value, the two
  above terms either diverge or reduce to $\lam y y$. In $\lambdabla$,
  the encoding of $\reset{e_1}$ can reduce to a control-stuck term,
  e.g., if $e_1 = \prFresh x {\prWithSC x y y}$, making $\reset{\reset
    {e_1} \iapp (\reset {e_2} \iapp {\shift x {\lam y y}})}$ stuck as
  well, while $e_2$ may diverge, and a stuck term is not equivalent to
  a diverging one.

\end{ex}

\begin{rem}
  \label{rem:throw}
  We can equivalently define $\lamshift$ with captured pure contexts
  as values and a throw construct $\prPushSC v t$, as in~$\lambdabla$,
  using the following reduction rules
  \vspace{1mm}
  \[
  \begin{array}{rcl}
    \shift x e & \rawred & \subst{e}{x}{\reifCtx E} \\[1mm]
    \prPushSC {\reifCtx E} v & \rawred &  \inctx E v
  \end{array}
  \]
  \vspace{1mm}

  \noindent
  and with $\prPushSC {\reifCtx E} F$ as an evaluation context and
  $\prPushSC {\reifCtx E}{E'}$ as a pure context. Only values are
  thrown to captured contexts, unlike in $\lambdabla$. In this
  section, we stick to the syntax we use
  in~\cite{Biernacki-Lenglet:APLAS13,Biernacki-al:HAL15} to facilitate
  comparisons with these papers. We discuss how to adapt the LTS to
  the syntax with throw in Remark~\ref{rem:lts-throw}.
\end{rem}

\subsection{Bisimilarity and Up-to Techniques} 

For bisimulation up to related contexts to be useful, we want to be
able to save evaluation context (not necessarily pure) in states. To
do so, we let $\enve$, $\envef$ range over sequences of evaluation
contexts, and we consider states of the form $(\enve, \env, e)$, where
$\env$ is still a sequence of values. Multi-hole contexts, whose
syntax is given below, are now filled with~$\enve$ and $\env$.

\vspace{2mm}
\noindent\begin{tabularx}{\linewidth}{@{\hspace{\parindent}}rlXr@{}}
$\mhc$ & $\bnfdef$ & $\cval \mhc \bnfor \app \mhc \mhc \bnfor 
  \reset \mhc \bnfor \shift{x}{\mhc} \bnfor \appcont {\holecont_i} \mhc$ 
  & (contexts) \\[1mm]
$\cval \mhc$ & $\bnfdef$ & $\var{x} \bnfor \lam{x}{\mhc} \bnfor \hole_i$
  & (value contexts) \\[1mm]
$\fmhc$ & $\bnfdef$ & $\mtectx \bnfor \argectx{\fmhc}{\mhc} \bnfor
\valectx{\cval \mhc}{\fmhc} \bnfor \reset{\fmhc} \bnfor \appcont {\holecont_i} \fmhc $
  & (evaluation contexts)
\end{tabularx}
\vspace{2mm}

\noindent We write $\inctx \mhc {\enve, \env}$ to say that $\holecont_i$ of
$\mhc$ is filled with the context $\enve_i$, as in Section~\ref{s:better}, and
each hole $\hole_j$ is plugged with the value $\env_j$. As before, it assumes
that each index $i$ of $\holecont_i$ is smaller than the size of $\enve$, and
each $j$ of $\hole_j$ is smaller than the size of $\env$. Similarly, we write
$\inctx \fmhc {e, \enve, \env}$ for evaluation contexts, so that $e$ goes into
$\hole$.

\begin{figure}
\vspace{1mm}
\textbf{Rules common to both semantics:}
\begin{mathpar}
  \inferrule{\red{e_1}{e_2}}
  {(\enve, \env, e_1) \lts\tau (\enve, \env, e_2)}
  \and
  \inferrule{\env_j = \lam x e}
  {(\enve, \env) \ltslam j {\cval\mhc} (\enve, \env, \subst e x
    {\inctxe{\cval\mhc}{\enve, \env}})}
  \and
  \inferrule{ }
  {(\enve, \env) \ltsval (\enve, \env)}
  \and
  \inferrule{e \mbox{ is stuck} \\ \inctx \fmhc {e, \enve, \env} \redbis e'}
  {(\enve, \env, e) \ltsstuck \fmhc (\enve, \env, e')}
  \and
  \inferrule{ }
  {(\enve, \env) \ltsctxx i {\cval\mhc} (\enve, \env, \inctx {\enve_i} {\inctxe
      {\cval\mhc}{\enve, \env}})}
  \and
  \inferrule{\enve_i = E}
  {(\enve, \env) \ltspure i (\enve, \env)}
  \and
  \inferrule{\enve_i = \inctx F {\reset E}}
  {(\enve, \env) \ltsnotpure i (\enve, \inctx F {\reset \hole}, \reset E, \env)}
\vspace{2mm}
\end{mathpar}

\textbf{Extra rule for the original semantics:}
\begin{mathpar}
  \inferrule{e \mbox{ is not stuck} }
  {(\enve, \env, e) \ltsstuck \fmhc (\enve, \env, \inctx  \fmhc {e, \enve, \env})}
\vspace{2mm}
\end{mathpar}

\textbf{Up-to techniques for both semantics:}
\begin{mathpar}
  \inferrule{(\seq {F}, \enve, \seq v, \env, e_1) \rel (\seq {
      {F'}}, \envef, \seq w, \envd, e_2)} {(\enve, \env, e_1) \weak\rel
    (\envef, \envd, e_2)} 
  \and 
\inferrule{(\enve, \env) \rel (\envef, \envd)}
{(\enve, \inctx \fmhcs {\enve, \env}, \env, \inctx{\mhcvs}{\enve, \env}, \inctx{\mhc}{\enve, \env}) \utrctxv\rel 
 (\envef, \inctx \fmhcs {\envef, \envd}, \envd,\inctx{\mhcvs}{\envef, \envd},\inctx{\mhc}{\envef, \envd})}
\and
\inferrule{(\enve, \env, e_1) \rel (\envef, \envd, e_2)}
{(\enve, \inctx \fmhcs {\enve, \env}, \env, \inctx{\mhcvs}{\enve, \env}, \inctx{\fmhc}{e_1, \enve, \env}) \utrctx\rel
 (\envef, \inctx \fmhcs {\envef, \envd}, \envd,\inctx{\mhcvs}{\envef, \envd},\inctx{\fmhc}{e_2, \envef, \envd})}
\end{mathpar}

\caption{LTS and up-to techniques for \textshift and \textreset}
\label{fig:LTS-shift}
\end{figure}

We present the LTS and up-to techniques for the two semantics of $\lamshift$ in
Figure~\ref{fig:LTS-shift}. In $\lambdabla$, having $\star$ holes in multi-hole
contexts helps when testing captured contexts as well as for the up-to
techniques. In contrast, in $\lamshift$, $\star$ holes are useful only for the
up-to techniques, and not for the bisimilarity itself, even if we consider the
syntax with captured contexts (see Remark~\ref{rem:lts-throw}). As a result,
some of the transitions are only for the bisimilarity, namely $\lts\tau$,
$\ltslam j {\cval \mhc}$, $\ltsval$, and $\ltsstuck{\fmhc}$, while the remaining
three are for bisimulations up to context: they are used only if $\enve$ is not
empty.

The transition $\ltsctxx i {\cval\mhc}$ tests the evaluation context $\enve_i$
by passing it a value built from $\enve$ and $\env$. A stuck term is able to
distinguish a pure context from an impure one, and it can extract from
$\inctx F {\reset E}$ the context up to the first enclosing reset $\reset E$.
However, unlike in $\lambdabla$, we cannot decompose $F$ further, because the
capture leaves the delimiter in place: we can distinguish $\hole$ from
$\reset\hole$, but not $\reset \hole$ from $\reset{\reset \hole}$. We use
$\ltspure i$ and $\ltsnotpure i$ to perform these tests: $\ltspure i$ simply
states that $\enve_i$ is pure, while $\ltsnotpure i$ decomposes
$\enve_i = \inctx F {\reset E}$ into $\inctx F {\reset \hole}$ and $\reset E$.
Because we leave a \textreset inside $F$, applying $\ltsnotpure i$ to
$\inctx F {\reset \hole}$ does not decompose~$F$ further, but simply generates
$\inctx F {\reset \hole}$ again (and $\reset \hole$), and duplicated contexts
can then be ignored thanks to strengthening.

The transition $\ltsstuck \fmhc$ compares stuck terms in the relaxed
semantics. In the original semantics, we can also relate with the extra rule a
stuck term with a regular term: we prove in Example~\ref{ex:skkt} that
$\shift k {\app k e}$ is equivalent to $e$ in that semantics if
$k \notin \freeVars e$. When the extra rule is applied to two non stuck terms
$e_1$ and $e_2$, it generates expressions $\inctx \fmhc {e_1, \enve, \env}$ and
$\inctx \fmhc {e_2, \envef, \envd}$ which are automatically related with up to
contexts, so the extra rule does not produce additional testing for regular
terms. The transition $\ltsstuck \fmhc$ uses any evaluation context~$\fmhc$, and
not simply a context of the form $\reset\emhc$ with $\emhc$ a pure context, as
we do in~\cite{Biernacki-Lenglet:APLAS13, Biernacki-al:HAL15}. We do so to take
$\holecont_i$ into account: a context $\appcont {\holecont_i} \emhc$ may also
trigger a capture if $\enve_i$ is an impure context. Besides, if
$(\enve, \env, e_1) \rel (\envef, \envd, e_2)$ and $\enve_i$ is pure, then
$\envef_i$ may be impure if~$e_1$ and~$e_2$ contain infinite behavior (and
thus, the transitions $\ltspure i$ and $\ltsnotpure i$ are never applied). For
example, we have
$(\hole, \emptyset, \shift k \Omega) \ltsstuck {\appcont{\holecont_1} \hole}
(\hole, \emptyset, \shift k \Omega)$
and
$(\reset\hole, \emptyset, \shift k \Omega) \ltsstuck {\appcont{\holecont_1}
  \hole} (\reset\hole, \emptyset, \reset \Omega)$;
the two resulting states are distinguished in the relaxed semantics, but they
are equated in the original one. However, what is beyond the first enclosing
\textreset of a testing context $\inctx \fmhc {\enve, \env}$, and therefore do
not interact with the tested terms, can be ignored thanks to bisimulation up to
related contexts, as in Example~\ref{ex:stuck-utctx}.

The transitions $\lts\tau$, $\ltslam j {\cval \mhc}$, and $\ltsstuck{\fmhc}$ are
active because they correspond to reduction steps, and $\ltspure i$ and
$\ltsnotpure i$ are active because they provide information on the tested
contexts (being pure or not, and how to decompose contexts that are not
pure). As before, $\ltsval$ is passive because it informs about the nature of
the tested states (composed only of values), and $\ltsctxx i {\cval\mhc}$ is
passive because it does not provide any information on the tested context nor
does it correspond to a reduction step. 

\begin{rem}
  \label{rem:lts-throw}
  If captured contexts are considered values, as suggested in
  Remark~\ref{rem:throw}, then they are stored in $\env$ and $\envd$,
  and therefore cannot be used to fill a $\star$ hole in a multi-hole
  context. They are tested with the same rule as in $\lambdabla$
  \vspace{1mm}
  \begin{mathpar}
    \inferrule{\env_i = \reifCtx E}
    {(\enve, \env) \ltsctx i {\cval\mhc} (\enve, \env, \inctx {E} {\inctxe
        {\cval\mhc}{\enve, \env}})}
  \end{mathpar}
  \vspace{1mm}

  \noindent
  except it would be an active transition in $\lamshift$, as testing with a
  value corresponds to the throw reduction rule. So unlike in $\lambdabla$, we
  have two transitions to test contexts, in this version of $\lamshift$: one,
  active, to test a pure context in $\env$, which is used for the bisimulation,
  and one, passive, to test any evaluation context in $\enve$, which is useful
  only for up-to techniques.

\end{rem}

The definitions of the up to techniques are as expected, with weakening and
strengthening for contexts as well as for values. We write $\bbisimO$ and
$\bbisimR$ for the $\star$-bisimilarities obtained from the transitions for
respectively the original and relaxed semantics. For both semantics, the
following lemma holds. 
\begin{lem} $\setF \defeq \{ \rawweak, \rawutrctxv, \rawutrctx\}$ is
  diacritically compatible, with $\strong \setF= \{ \rawweak\}$. 
\end{lem}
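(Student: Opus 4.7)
I take $\setS \defeq \{\rawweak\}$ and verify the two clauses of diacritical compatibility: $\rawweak \sevolve \fid{\{\rawweak\}}^\omega, \fid{\setF}^\omega$, and $f \fevolve \fid{\{\rawweak\}}^\omega \comp \fid{\setF} \comp \fid{\{\rawweak\}}^\omega, \fid{\setF}^\omega$ for every $f \in \setF$. Both are established by a case analysis on the transitions of Figure~\ref{fig:LTS-shift}, following the pattern of the compatibility proof sketched in Section~\ref{ss:up-to}. The inclusion $\{\rawweak\} \subseteq \strong{\setF}$ is then subsumed by the first clause, while the reverse inclusion needs a separate unsoundness argument.

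\textbf{For $\rawweak$.}
Assume $\rel \pprogress \rels, \relt$ and $(\enve, \env, e_1) \weak\rel (\envef, \envd, e_2)$ arising from $(\seq{F}, \enve, \seq{v}, \env, e_1) \rel (\seq{F'}, \envef, \seq{w}, \envd, e_2)$. Every transition emitted by the weakened state lifts to a transition of the richer one after shifting by $|\seq{F}|$ and $|\seq{v}|$ the indices occurring in the label and inside the contexts $\cval\mhc$, $\fmhc$ it carries. The matching transition supplied by the progression hypothesis, stripped of the extras again, matches the original one and its result lies in $\weak\rels \subseteq \fid{\{\rawweak\}}^\omega(\rels)$ for the passive labels $\ltsval$ and $\ltsctxx i{\cval\mhc}$, and in $\weak\relt \subseteq \fid{\setF}^\omega(\relt)$ for the active ones.

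\textbf{For $\rawutrctxv$ and $\rawutrctx$.}
Regular evolution lets me use $\rel \pprogress \rel, \relt$. Given an enriched pair $\sts_1, \sts_2$ built from a related underlying pair by shared skeletons $\fmhcs$, $\mhcvs$, and a terminal $\mhc$ or $\fmhc$, I analyse each transition from $\sts_1$ by locating its action site. When the site lies entirely inside the shared skeleton, or when the label queries an index pointing at the newly added context or value, both sides emit symmetric transitions whose results can be rewritten as the same up-to technique applied to the unchanged underlying pair, landing in $\fid{\setF}(\rel)$ for passive transitions and in $\fid{\setF}^\omega(\relt)$ for active ones; the auxiliary $\reset\hole$ produced by $\ltsnotpure i$ on a shared context is absorbed by pre-composing with $\rawweak$. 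When the site points into the underlying $\enve$ or $\env$, the transition is a translate of a transition of the original related pair (for instance, an $\ltslam j{\cval\mhc'}$ with its argument assembled through the skeleton, or $\ltspure i$, $\ltsnotpure i$ on an unshared context); the progression hypothesis supplies a match, and re-wrapping its result with the same skeleton yields a pair in $\utrctx\rel$ for passive transitions and in $\utrctx\relt$ for active ones.

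\textbf{The main obstacle and the claim $\strong{\setF} = \{\rawweak\}$.}
The hard part will be the cross-boundary actions, namely a $\beta_v$-redex whose function is an $\env_j$ and whose argument is assembled from the shared skeleton, or an $\ltsstuck{\fmhc'}$ in which the testing $\fmhc'$ together with the shared $\fmhc$ delimits a $\rawshift$ hidden inside some $\enve_i$ behind a $\holecont_i$, or dually where the shared skeleton supplies the delimiter for a $\rawshift$ contained in the underlying term. I handle each subcase by replaying the step as an underlying transition (an $\ltslam j{\cval\mhc'}$, or an $\ltsnotpure i$ exposing the relevant pure prefix of $\enve_i$ followed by $\ltsstuck{\fmhc''}$), choose a new shared skeleton that re-packages the remainder, and absorb any extra output through $\rawweak$, ending in $\fid{\{\rawweak\}}^\omega \comp \rawutrctx \comp \fid{\{\rawweak\}}^\omega(\rel)$ or $\fid{\setF}^\omega(\relt)$. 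For the reverse inclusion $\strong{\setF} \subseteq \{\rawweak\}$, the motivation recalled in Section~\ref{ss:motivation} applies verbatim: if $\rawutrctxv$ or $\rawutrctx$ were strong, it could be applied immediately after the passive $\ltsctxx i{\cval\mhc}$ transition, equating states whose only difference is an unrelated pair of evaluation contexts in position $i$, in contradiction with their distinguishability by, e.g., $\ltspure i$ when one context is pure and the other is not.
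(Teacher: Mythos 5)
Your main compatibility argument takes the expected route: $\setS = \{\rawweak\}$, strong evolution of $\rawweak$ by lifting/projecting transitions along an index shift, and regular evolution of $\rawutrctxv$ and $\rawutrctx$ by locating the action site of each transition relative to the shared skeleton, with the correct budgets (one non-strong application, sandwiched by strong ones, after a passive transition; arbitrary compositions of $\fid\setF$ after an active one). The cross-boundary cases are only gestured at, and one detail of your recipe is off: $\ltspure i$ and $\ltsnotpure i$ are defined only on states without a term component, so the ``replay $\ltsnotpure i$ then $\ltsstuck{\fmhc''}$'' sequence is not available on a single underlying state. In the $\rawutrctxv$ case the underlying pair is a value pair, so $\ltsnotpure i$ is the right move but there is no underlying $\ltsstuck{}$ to follow it (the stuck redex lives entirely in the skeleton); in the $\rawutrctx$ case with a stuck underlying term you cannot decompose $\enve_i$ at all and must instead fold the offending $\holecont_i$ into the testing context of an underlying $\ltsstuck{\fmhc''}$ transition --- which is exactly why that transition ranges over arbitrary $\fmhc$ rather than contexts of the form $\reset\emhc$. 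Since these transitions are active, the generous target ${\fid\setF}^\omega(\relt)$ absorbs either resolution, so this is repairable imprecision rather than a dead end.

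The step that genuinely fails is your argument for $\strong\setF \subseteq \{\rawweak\}$. A pair of contexts of which one is pure and the other impure is separated outright by the \emph{active} transition $\ltspure i$, with no up-to technique in play; the two-element candidate relation you describe therefore fails to be a bisimulation up to $\rawutrctxv$ for reasons independent of whether $\rawutrctxv$ is strong, so no contradiction is derived. To rule out strength you need two contexts that agree on everything directly observable ($\ltsval$, $\ltspure i$, $\ltsnotpure i$) yet differ behaviorally, e.g. $\enve = (\hole)$ and $\envef = (\valectx{(\lam{x}{\Omega})}{\hole})$, both pure. After the passive $\ltsctxx 1 {\cval\mhc}$ the resulting states are $(\enve, \inctx{\cval\mhc}{\enve,\emptyset})$ and $(\envef, \app{(\lam{x}{\Omega})}{\inctx{\cval\mhc}{\envef,\emptyset}})$, both of the form $\utrctxv\rel$ with the same terminal context $\appcont{\holecont_1}{\cval\mhc}$; so if $\rawutrctxv$ were usable after this passive transition, $\{(\enve,\emptyset),(\envef,\emptyset)\}$ would be a $\star$-bisimulation up to $\rawutrctxv$, wrongly equating a context that returns its argument with one that diverges on it. This is the adaptation to $\lamshift$ of the unsoundness scenario of Section~\ref{ss:motivation}, and it is what actually pins $\strong\setF$ down to $\{\rawweak\}$.
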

\noindent As before, this lemma implies that $\bbisimO$ and $\bbisimR$ are sound
w.r.t. respectively $\ctxEqO$ and $\ctxEqR$, and completeness proofs are as
usual.
\begin{thm}
  $e_1 \ctxEqO e_2$ iff
  $(\emptyset, \emptyset, e_1) \bbisimO (\emptyset,\emptyset, e_2)$, and
  $e_1 \ctxEqR e_2$ iff
  $(\emptyset, \emptyset, e_1) \bbisimR (\emptyset, \emptyset, e_2)$.
\end{thm}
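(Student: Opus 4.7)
The proof will mirror the characterization result of Theorem~\ref{t:charac-std}, with the two semantics handled in parallel: both directions rely on the diacritical compatibility lemma just stated, together with a careful connection between the LTS in Figure~\ref{fig:LTS-shift} and the observables $\eqObsO$, $\eqObsR$. The only substantial difference between the original and the relaxed case is the clause for control-stuck terms, which is exactly reflected in the extra LTS rule for $\ltsstuck \fmhc$ under the original semantics and in the different definitions of $\ctxEqO$ and $\ctxEqR$ (the outer $\rawreset$ in the former).

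For soundness, I would start from Lemma~\ref{l:properties-compatibility-better} applied to $\setF = \{\rawweak, \rawutrctxv, \rawutrctx\}$, which yields $\rawutrctx(\bbisim) \subseteq \bbisim$. Plugging the bisimilar pair $(\emptyset, \emptyset, e_1) \bbisim (\emptyset, \emptyset, e_2)$ into an arbitrary evaluation context $\fmhc = F$ (respectively $\fmhc = \reset F$ for the original case) and then instantiating $\enve$, $\env$ to be empty, I obtain $(\emptyset, \emptyset, \inctx F {e_1}) \bbisimR (\emptyset, \emptyset, \inctx F {e_2})$, and $(\emptyset, \emptyset, \reset{\inctx F {e_1}}) \bbisimO (\emptyset, \emptyset, \reset{\inctx F {e_2}})$. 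It then suffices to show that if $(\enve, \env, e) \bbisim (\envef, \envd, e')$ at the closed, empty-environment level, then $e$ and $e'$ have the same observables: reduction to a value is matched by iterating $\lts\tau$ and then $\ltsval$, divergence is preserved because an infinite $\lts\tau$-chain on one side forces an infinite weak $\tau$-chain on the other (otherwise the would-be normal form would expose a distinguishing transition), and in the relaxed case stuck terms are matched via $\ltsstuck \fmhc$ with $\fmhc = \hole$.

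For completeness, I would follow the hint used for Theorem~\ref{t:charac-std} and define the candidate relation
\[
\rel \;=\; \{\,((\enve, \env, e_1),\,(\envef, \envd, e_2)) \mid \forall \fmhc,\ \inctx \fmhc {e_1, \enve, \env} \eqObs \inctx \fmhc {e_2, \envef, \envd}\,\},
\]
where $\eqObs$ is $\eqObsR$ for the relaxed case and, for the original case, one tests instead $\reset{\inctx \fmhc {\cdots}} \eqObsO \reset{\inctx \fmhc {\cdots}}$ so as to match the definition of $\ctxEqO$. The goal is to show $\rel \pprogress \rel, \rel$ (possibly modulo some strong up-to techniques), which restricted to $\env, \envef = \emptyset$ gives what we want. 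Each clause is checked by instantiating the universal $\fmhc$ cleverly: the $\lts\tau$ clause is immediate from $\eqObs$; for $\ltslam j {\cval\mhc}$ and the passive $\ltsctxx i {\cval\mhc}$ one absorbs the built arguments into a bigger $\fmhc$; $\ltspure i$ and $\ltsnotpure i$ are matched by choosing $\fmhc$ that places a $\rawshift$ inside $\enve_i$, forcing the two sides to agree on whether a capture crosses a $\rawreset$, and on the decomposition up to the first enclosing $\rawreset$; and $\ltsstuck \fmhc$ is matched by choosing a testing context that triggers the capture. The main obstacle will be the $\ltsnotpure i$ case, where the candidate must produce \emph{the same} decomposition on both sides: this requires showing that if two evaluation contexts appear observationally equivalent when placed around all probes involving $\rawshift$, then their decompositions into $\inctx F {\reset \hole}$ and $\reset E$ agree, which is exactly the kind of reasoning that motivated the introduction of the $\ltsnotpure i$ and $\ltspure i$ transitions in the first place.
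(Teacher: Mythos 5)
Your proposal matches the paper's approach: soundness is obtained from the diacritical compatibility of $\{\rawweak, \rawutrctxv, \rawutrctx\}$ via Lemma~\ref{l:properties-compatibility-better} (congruence w.r.t.\ evaluation contexts, then relating weak transitions to the observables of each semantics), and completeness follows the same recipe as Theorem~\ref{t:charac-std}, showing that the relation of states whose pluggings into all testing contexts are observationally equivalent (with the outer $\rawreset$ for the original semantics) is a bisimulation up to the strong techniques. The details you flag --- the extra $\ltsstuck\fmhc$ rule for the original semantics and the need to match $\ltspure i$/$\ltsnotpure i$ by probing contexts with $\rawshift$ --- are exactly the points the paper's (deferred) proof has to handle, so this is essentially the same argument.
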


\subsection{Examples} We give examples for the original semantics of
equivalences proved in~\cite{Biernacki-Lenglet:APLAS13, Biernacki-al:HAL15}, to
show that the proofs are much easier here. 

\begin{ex}
  \label{ex:skkt}
  If $k \notin \freeVars e$, then
  $(\emptyset, \emptyset, \shift k {\app k e}) \bbisimO (\emptyset, \emptyset,
  e)$.
  We show that \iftoggle{paper}{the relation }{}
  \[ \rel \mathord{\defeq} \{ (\emptyset, \emptyset, \shift k {\app k e}),
  (\emptyset, \emptyset, e) \} \cup \{ (\reset{\lam x {\reset {\inctx E x}}
    \iapp \hole}, \reset{\reset \hole}, \emptyset), (\reset E, \reset \hole,
  \emptyset) \mid x \notin \freeVars E \} \]
  is a bisimulation up to related contexts. If $e$ is not control-stuck, the
  transition
  $(\emptyset, \emptyset, \shift k {\app k e}) \ltsstuck \fmhc (\emptyset,
  \emptyset, \inctx F {\reset {\app {\lam x {\reset {\inctx E x}}} e}}))$
  is matched by the transition
  $(\emptyset, \emptyset, e) \ltsstuck \fmhc (\emptyset, \emptyset, \inctx F
  {\reset {\inctx E e}})$,
  assuming $x$ is fresh and
  $\inctx \fmhc {\emptyset, \emptyset} = \inctx F {\reset E}$ (the case
  $\inctx \fmhc {\emptyset, \emptyset} = E$ is simple). If
  $e = \inctx {E'}{\shift {k'}{e'}}$, then
  $(\emptyset, \emptyset, e) \ltsstuck \fmhc (\emptyset, \emptyset, \inctx F
  {\reset {\subst {e'}{k'}{\lam x {\reset{\inctx E{\inctx {E'} x}}}}}})$
  is matched by the sequence \iftoggle{paper}{
  $(\emptyset, \emptyset, \shift k {\app k e}) \ltsstuck \fmhc \lts\tau
  (\emptyset, \emptyset, \inctx F {\reset {\subst {e'}{k'}{\lam x {\reset{(\lam
            y {\inctx E y}) \iapp {\inctx {E'} x}}}}}})$,}
{
  $$(\emptyset, \emptyset, \shift k {\app k e}) \ltsstuck \fmhc \lts\tau
  (\emptyset, \emptyset, \inctx F {\reset {\subst {e'}{k'}{\lam x {\reset{(\lam
            y {\inctx E y}) \iapp {\inctx {E'} x}}}}}}),$$}
  with $x$, $y$ fresh and
  $\inctx \fmhc {\emptyset, \emptyset} = \inctx F {\reset E}$. In both cases,
  the resulting states are in $\utrctxv \rel$. Let
  $(\enve, \emptyset) \defeq (\reset{\lam x {\reset {\inctx E x}} \iapp \hole},
  \reset{\reset \hole}, \emptyset)$
  and $(\envef, \emptyset) \defeq (\reset E, \reset \hole, \emptyset)$. Then the
  sequence
  $(\enve, \emptyset) \ltsctxx 1 {\cval \mhc} \lts\tau (\enve, \emptyset,
  \reset{\reset{\inctx E {\inctx {\cval \mhc}{\enve, \emptyset}}}})$
  is matched by
  $(\enve, \emptyset) \ltsctxx 1 {\cval \mhc} (\enve, \emptyset, \reset{\inctx E
    {\inctx {\cval \mhc}{\enve, \emptyset}}})$,
  since the resulting states are in $\utrctxv\rel$, and we use up to related
  contexts after a $\lts\tau$ transition. Finally,
  $(\enve, \emptyset) \ltsctxx 2 {\cval \mhc} \lts\tau \lts\tau (\enve,
  \emptyset, \inctx{\cval \mhc}{\enve, \emptyset})$
  is matched by
  $(\envef, \emptyset) \ltsctxx 2 {\cval \mhc} \lts\tau (\envef, \emptyset,
  \inctx{\cval \mhc}{\envef, \emptyset})$,
  and the context splitting transitions $\ltsnotpure i$ are easy to check for
  $i \in \{ 1, 2\}$.

\end{ex}

\begin{ex}
  If $k \notin \freeVars{e_2}$, then
  $(\emptyset, \emptyset, (\lam x {\shift k {e_1}}) \iapp e_2) \bbisimO
  (\emptyset, \emptyset, \shift k {((\lam x {e_1}) \iapp e_2)})$.  The relation
  \begin{multline*}
    \rel \mathord{\defeq} \{ (\emptyset, \emptyset, (\lam x {\shift k {e_1}})
    \iapp e_2),
    (\emptyset, \emptyset,\shift k {((\lam x {e_1}) \iapp e_2)} ) \} \\
    \cup \{ (\reset {\inctx E {(\lam x {\shift k {e_1}}) \iapp \hole}},
    \emptyset), (\reset{\lam x {\subst {e_1} k {\lam y {\reset{\inctx E y}}}} \iapp
    \hole}, \emptyset) \mid y \notin \freeVars E \}
  \end{multline*}
  is a bisimulation up to related contexts. As in the previous example, a case
  analysis on whether $e_1$ is control-stuck or not shows that the
  $\ltsstuck \fmhc$ transitions from
  $(\emptyset, \emptyset, (\lam x {\shift k {e_1}}) \iapp e_2)$ and
  $(\emptyset, \emptyset, \shift k {((\lam x {e_1}) \iapp e_2)})$ produce states
  in $\utrctxv \rel$. If
  $(\enve, \emptyset) \defeq (\reset {\inctx E {(\lam x {\shift k {e_1}}) \iapp
      \hole}}, \emptyset)$ and $(\envef, \emptyset) \defeq (\reset{\lam
  x {\subst {e_1} k {\lam y {\reset{\inctx E y}}}} \iapp \hole}, \emptyset)$,
  then  
  \vspace{1mm}
  \begin{align*}
    (\enve,
    \emptyset) & \ltsctxx 1 {\cval \mhc} \lts\tau \lts\tau (\enve, \emptyset,
    \reset{\subst {\subst {e_1} x {\inctx {\cval \mhc}{\enve, \emptyset}}} k {\lam
    y {\reset {\inctx E y}}}}) \\
    (\envef,
    \emptyset) & \ltsctxx 1 {\cval \mhc} \lts\tau (\envef, \emptyset,
    \reset{\subst {\subst {e_1} x {\inctx {\cval \mhc}{\envef, \emptyset}}} k {\lam
    y {\reset {\inctx E y}}}}) \\
  \end{align*}
  The resulting states are in $\utrctxv\rel$, as wished. A completely written
  proof of this result takes less than a page, while the proof of the same
  result in~\cite{Biernacki-al:HAL15} requires several pages, because of the
  lack of useful up-to techniques. 
\end{ex}

\section{Related Work and Conclusion}
\label{s:conclusion}

\subsubsection*{Related work.} We discuss our previous work on \textshift and
\textreset at the beginning of
Section~\ref{s:shift-reset}. In~\cite{Yachi-Sumii:APLAS16}, the authors propose
an environmental bisimilarity for a calculus with $\mathsf{call/cc}$, an
operator which captures the whole surrounding context. The difficulty in such a
language is that reduction is not preserved by evaluation context:
$e \rawred e'$ does not imply $\inctx E e \rawred \inctx E {e'}$, as $E$ may be
captured by $e$. As a result, the environmental bisimilarity
of~\cite{Yachi-Sumii:APLAS16} factors in these evaluation contexts when testing
values. This relation is also not coinductive, making it closer to contextual
equivalence than to a regular environmental bisimilarity. An accompanying
bisimulation up to context is also defined, but it is barely used in the
examples of~\cite{Yachi-Sumii:APLAS16}. The equivalence proofs of these examples
are thus almost as difficult as with contextual equivalence. It is not clear if
and how $\star$-bisimilarity can improve on these results; we plan to
investigate further this question.

Environmental bisimilarity has been defined in several calculi with dynamic
resource generation, like stores and
references~\cite{Koutavas-Wand:POPL06,Koutavas-Wand:ESOP06,Sumii:CSL09},
information hiding constructs~\cite{Sumii-Pierce:TCS07,Sumii-Pierce:JACM07}, or
name creation~\cite{Benton-Koutavas:MSR2008,Pierard-Sumii:LICS12}. In these
works, an expression is paired with its generated resources, and behavioral
equivalences are defined on these pairs. Our approach is different since we do
not carry sets of generated prompts when manipulating expressions (e.g., in the
semantic rules of Section~\ref{s:calculus}); instead, we rely on side-conditions
and permutations to avoid collisions between prompts. This is possible because
all we need to know is if a prompt is known to an outside observer or not, and
the correspondences between the public prompts of two related expressions; this
can be done through the environment of the bisimilarity. This approach cannot be
adapted to more complex generated resources, which are represented by a mapping
(e.g., for stores or existential types), but we believe it can be used for name
creation in $\pi$-calculus~\cite{Pierard-Sumii:LICS12}.

A line of work on program equivalence for which relating evaluation contexts is
crucial, as in our work, are logical relations based on the notion of
biorthogonality~\cite{Pitts-Stark:HOOTS98}. In particular, this concept has been
successfully used to develop techniques for establishing program equivalence in
ML-like languages with $\mathsf {call/cc}$~\cite{Dreyer-al:JFP12}, and for proving the
coherence of control-effect subtyping~\cite{Biernacki-Polesiuk:TLCA15}. Hur et
al. combine logical relations and behavioral equivalences in the definition of
\emph{parametric bisimulation}~\cite{Hur-al:POPL12}, where terms are reduced to
normal forms that are then decomposed into subterms related by logical
relations. This framework has been extended to abortive control
in~\cite{Hur-al:TR14}, where \emph{stuttering} is used to allow terms not to
reduce for a finite amount of time when comparing them in a bisimulation
proof. This is reminiscent of our distinction between active and passive
transitions, as passive transitions can be seen as ``not reducing'', but there
is still some testing involved in these transitions. Besides, the concern is
different, since the active/passive distinction prevents the use of up-to
techniques, while stuttering has been proposed to improve plain parametric
bisimulations.

\subsubsection*{Conclusion and future work}  We have developed a behavioral
theory for Dybvig et al.'s calculus of multi-prompted delimited
control, where the enabling technology for proving program equivalence
are environmental bisimulations, presented in Madiot's style. The
obtained results generalize our previous work in that they account for
multiple prompts and local visibility of dynamically generated
prompts. Moreover, the results of Section~\ref{s:better} considerably
enhance reasoning about captured contexts by treating them as
first-class objects at the level of bisimulation proofs (thanks to the
construct $\holecont_i$) and not only at the level of terms. The
resulting notion of bisimulation up to related contexts improves on
the existing bisimulation up to context in the presence of control
operators, as we can see when comparing Example~\ref{ex:beta-omega} to
the proof of the same result
in~\cite{Biernacki-Lenglet:APLAS13,Biernacki-al:HAL15}. Moreover, as
demonstrated in Section~\ref{s:shift-reset}, the approach of
Section~\ref{s:better} smoothly carries over to more traditional
calculi with delimited-control operators, where, in contrast to
$\lambdabla$, captured continuations are represented as functions.

We would like to see if this work scales to other formulations of control and
continuations, such as symmetric calculi~\cite{Filinski:CTCS89,
  Curien-Herbelin:ICFP00, Wadler:ICFP03}.
We believe bisimulation up to related contexts could be useful also for
constructs akin to control operators, like passivation in
$\pi$-calculus~\cite{Pierard-Sumii:LICS12}. The soundness of this
up-to technique has been proved in an extension of Madiot's framework;
we plan to investigate further this extension, to see how useful it
could be in defining up-to techniques for other languages. Finally, it
may be possible to apply the tools developed in this paper
to~\cite{Kobori-al:WoC15}, where a single-prompted calculus is
translated into a multi-prompted one, but no operational
correspondence is given to guarantee the soundness of the translation.

\subsubsection*{Acknowledgments} We would like to thank Jean-Marie Madiot for
the insightful discussions about his work, and Ma{\l}gorzata
Biernacka, Klara Zieli{\'n}ska, and the anonymous reviewers of FSCD and LMCS
for the helpful comments on the presentation of this work.

% Local variables:
% mode: latex
% TeX-master: "lmcs.tex"
% End:

\bibliographystyle{abbrv} \bibliography{bibrefs}

\end{document}